\newcommand{\wrt}{\hbox{{w.r.t.}}\xspace}
\newcommand{\first}{\textsf{(i)}\xspace}
\newcommand{\second}{\textsf{(ii)}\xspace}
\newcommand{\third}{\textsf{(iii)}\xspace}
\newcommand{\fourth}{\textsf{(iv)}\xspace}
\newcommand{\eg}{\textsf{\emph{e.g.,}}\xspace}
\newcommand{\ie}{\textsf{\emph{i.e.,}}\xspace}
\newcommand{\downer}{\textsf{DO}\xspace}
\newcommand{\downers}{\textsf{DOes}\xspace}
\newcommand{\mowner}{\textsf{MO}\xspace}
\newcommand{\fixme}[1]{\textcolor{red}{#1}}
\newcommand{\fixmeb}[1]{\textcolor{blue}{#1}}
\newcommand{\tabincell}[2]{\begin{tabular}{@{}#1@{}}#2\end{tabular}}
\newcommand{\integers}{\mathbb{Z}}
\newcommand{\ring}{\mathcal{R}}
\newcommand{\conv}{\mathsf{Conv2d}}
\newcommand{\grad}[1]{\nabla_{#1}}
\newcommand{\relu}{\mathsf{ReLU}}
\newcommand{\drelu}{\mathsf{DReLU}}
\renewcommand{\vector}[1]{\mathbf{#1}}
\newcommand{\enc}[1]{[\![#1]\!]}
\newcommand{\pk}{\mathsf{pk}}
\newcommand{\sk}{\mathsf{sk}}
\newcommand{\angleshare}[1]{\langle #1 \rangle}
\newcommand{\share}[1]{\left\langle #1 \right\rangle}
\newcommand{\simulator}{\mathsf{Sim}}
\newcommand{\hybrid}{\mathsf{Hyb}}
\newcommand{\paraspace}{\vspace{0.01in}}
\newcommand{\parab}[1]{\paraspace\noindent{\bf #1.}}
\definecolor{shadecolor}{RGB}{255,213,169}
\newcommand{\sys}{\textsf{Pencil}\xspace}
\newcommand{\sysOpt}{\textsf{Pencil}${}^+$\xspace}
\newcommand{\pcml}{\textsf{Pencil}\xspace}
\newcommand{\productProtocol}{\mathsf{P}}
\newcommand{\productProtocolPrep}{\productProtocol_{\mathsf{Prep}}}
\newcommand{\productProtocolOnline}{\productProtocol_{\mathsf{Online}}}
\newcommand{\reverse}{\mathsf{rev}}
\newtheoremstyle{nospacetheorem}
  {0.2ex}%   Space above
  {0.2ex}%   Space below
  {\slshape}%  Body font
  {\parindent}% Indent amount (empty = no indent, \parindent = para indent)
  {\scshape}% Thm head font
  {.}%         Punctuation after thm head
  {0.5em}%     Space after thm head: " " = normal interword space;
\newtheorem{definition}{Definition}[section]
\newtheorem{theorem}{Theorem}[section]
\newtheorem{corollary}{Corollary}[section]
\renewenvironment{proof}[1][\proofname]{\par
  \vspace{0pt}% remove the space after the theorem
  \pushQED{\qed}%
  \normalfont
  \topsep0pt \partopsep0pt % no space before
  \trivlist
  \item[\hskip\labelsep\indent
        \scshape
    #1\@addpunct{.}]\ignorespaces
}{%
  \popQED\endtrivlist\@endpefalse
%   \addvspace{6pt plus 6pt} % some space after
}
\begin{document}
%
% paper title
% can use linebreaks \\ within to get better formatting as desired
\title{\sys: Private and Extensible Collaborative Learning without the Non-Colluding Assumption}

% author names and affiliations
% use a multiple column layout for up to three different
% affiliations
\author{\IEEEauthorblockN{
Xuanqi Liu\IEEEauthorrefmark{1},
Zhuotao Liu\thanks{Zhuotao Liu is the corresponding author.}\IEEEauthorrefmark{1}\IEEEauthorrefmark{2},
Qi Li\IEEEauthorrefmark{1}\IEEEauthorrefmark{2},
Ke Xu\IEEEauthorrefmark{1}\IEEEauthorrefmark{2},
and
Mingwei Xu\IEEEauthorrefmark{1}\IEEEauthorrefmark{2}
}
\IEEEauthorblockA{
\IEEEauthorrefmark{1}Tsinghua University,
\IEEEauthorrefmark{2}Zhongguancun Laboratory \\
lxq22@mails.tsinghua.edu.cn, 
\{zhuotaoliu, qli01, xuke, xumw\}@tsinghua.edu.cn
}
}

% conference papers do not typically use \thanks and this command
% is locked out in conference mode. If really needed, such as for
% the acknowledgment of grants, issue a \IEEEoverridecommandlockouts
% after \documentclass

% for over three affiliations, or if they all won't fit within the width
% of the page, use this alternative format:
% 
%\author{\IEEEauthorblockN{Michael Shell\IEEEauthorrefmark{1},
%Homer Simpson\IEEEauthorrefmark{2},
%James Kirk\IEEEauthorrefmark{3}, 
%Montgomery Scott\IEEEauthorrefmark{3} and
%Eldon Tyrell\IEEEauthorrefmark{4}}
%\IEEEauthorblockA{\IEEEauthorrefmark{1}School of Electrical and Computer Engineering\\
%Georgia Institute of Technology,
%Atlanta, Georgia 30332--0250\\ Email: see http://www.michaelshell.org/contact.html}
%\IEEEauthorblockA{\IEEEauthorrefmark{2}Twentieth Century Fox, Springfield, USA\\
%Email: homer@thesimpsons.com}
%\IEEEauthorblockA{\IEEEauthorrefmark{3}Starfleet Academy, San Francisco, California 96678-2391\\
%Telephone: (800) 555--1212, Fax: (888) 555--1212}
%\IEEEauthorblockA{\IEEEauthorrefmark{4}Tyrell Inc., 123 Replicant Street, Los Angeles, California 90210--4321}}

% use for special paper notices
%\IEEEspecialpapernotice{(Invited Paper)}

\IEEEoverridecommandlockouts
\makeatletter\def\@IEEEpubidpullup{6.5\baselineskip}\makeatother
\IEEEpubid{\parbox{\columnwidth}{
    Network and Distributed System Security (NDSS) Symposium 2024\\
    26 February - 1 March 2024, San Diego, CA, USA\\
    ISBN 1-891562-93-2\\
    https://dx.doi.org/10.14722/ndss.2024.24512\\
    www.ndss-symposium.org
}
\hspace{\columnsep}\makebox[\columnwidth]{}}

% make the title area
\maketitle

% As a general rule, do not put math, special symbols or citations
% in the abstract
\begin{abstract}
    The escalating focus on data privacy poses significant challenges for collaborative neural network training, where data ownership and model training/deployment responsibilities reside with distinct entities. 
    Our community has made substantial contributions to addressing this challenge, proposing various approaches such as federated learning (FL) and privacy-preserving machine learning based on cryptographic constructs like homomorphic encryption (HE) and secure multiparty computation (MPC). 
    However, FL completely overlooks model privacy, and HE has limited extensibility (confined to only one data provider). While the state-of-the-art MPC frameworks provide reasonable throughput and simultaneously ensure model/data privacy, they rely on a critical non-colluding assumption on the computing servers, and relaxing this assumption is still an open problem. 

    In this paper, we present \sys, the first private training framework for collaborative learning that simultaneously offers data privacy, model privacy, and extensibility to multiple data providers, without relying on the non-colluding assumption.  
    Our fundamental design principle is to construct the $n$-party collaborative training protocol based on an efficient two-party protocol, and meanwhile ensuring that switching to different data providers during model training introduces no extra cost. 
    We introduce several novel cryptographic protocols to realize this design principle and conduct a rigorous security and privacy analysis. 
    Our comprehensive evaluations of \sys demonstrate that \first models trained in plaintext and models trained privately using \sys exhibit nearly identical test accuracies; \second The training overhead of \sys is greatly reduced: 
    \sys achieves $10 \sim 260\times$ higher throughput and 2 orders of magnitude less communication than prior art; \third \sys is resilient against both existing and adaptive (white-box) attacks.
\end{abstract}

% \keywords{neural networks, privacy-preserving machine learning, secure two-party computation}
% no keywords

% For peer review papers, you can put extra information on the cover
% page as needed:
% \ifCLASSOPTIONpeerreview
% \begin{center} \bfseries EDICS Category: 3-BBND \end{center}
% \fi
%
% For peerreview papers, this IEEEtran command inserts a page break and
% creates the second title. It will be ignored for other modes.
\IEEEpeerreviewmaketitle

% \vspace{-0.15cm}
\section{Introduction}

% The training of neural network models requires a large amount of high-quality training data. However, accessing such data, especially the private-domain data that is not publicly available on the Internet, raises a significant research problem. 
% Training a neural network  requires a substantial amount of training data from private domains that is not publicly. 

Recent years witnessed significant development and applications of machine learning, in which the most successful ones were the use of deep neural networks. 
Effective training of neural network models hinges on the availability of a substantial corpus of high-quality training data. 
However, in real-world business scenarios, the entities possessing data (\ie data providers, \downers) and the entity seeking to utilize data for training and deployment of a machine learning model (\ie a model deployer, \mowner) are distinct parties. One representative example of this is the collaborative anti-money laundering (AML)~\cite{collaboration-is-key-of-AML, enhance-collaboration-of-aml}, where multiple financial organizations (\eg banks) are interested in training anti-money laundering models based on the cell phone records (more accurately, a range of features engineered from these records) that are owned by the cellular service providers. 
However, the growing emphasis on data privacy, accompanied by the emergence of stringent regulations and laws such as the General Data Protection Regulation (GDPR)~\cite{GDPR}, has rendered the direct sharing of raw data across multiple organizations infeasible or even unlawful. Consequently, accessing training data, particularly private-domain data not readily accessible on the public Internet, presents a significant research challenge.

% gathering and utilizing the data across multiple private organizations become even more challenging.

% In practice, it is usual that the parties possessing data (\ie data providers, \downers) and the party who wants to utilize such data to train and subsequently deploy a machine learning model (\ie a model deployer, \mowner) are not the same party. 

Concretely, the above form of collaborative learning raises a combination of three critical requirements. \first Data Privacy: the training data of different \downers should be kept confidential. This is the preliminary requirement in privacy-preserving machine learning. \second Model Privacy: the trained model is only revealed to the \mowner (\ie the \mowner can perform model inference independently), but not disclosed to the \downers. In real-world business scenarios, the post-training model is proprietary. Thus, model privacy is equally important as data privacy. \third Extensibility: because the \mowner often seeks to train the model with multiple (and heterogeneous) data providers, it is critical to ensure that the \mowner can incrementally collaborate with different \downers. Concretely, we define the this form of learning paradigm \emph{private and extensible collaborative learning}. 

Over the past few years, our community has proposed significant research in this regard. However, none of the existing frameworks meet our aforementioned requirements. 
\first Federated Learning (FL) is a widely applied collaborative learning scheme that enables many participants to collectively train a model without sharing their original data. However, the model in FL is synchronized with every participant in every round of the global model update (\eg \cite{McMahan2016Communication, Bonawitz2017Practical, Bonawitz2019Towards, Konecn2016Federated, Gupta2018Distributed, Gao2020End, Jeon2020Privacy, li2023martfl}). Thus, in FL, the model privacy is simply ignored.
\second The second stream of approaches relies on secure multiparty computation (MPC) framework, where the \downers upload their data as  secret shares to several third-party computing servers (\eg \cite{Nandakumar2019towards, Tian2022sphinx, Hesamifard2018Privacy, Sav2021Poseidon, Mohassel2017secureml, Agrawal2019quotient, Mohassel2018aby3, Wagh2019securenn, Patra2022blaze, Chaudhari2020trident, Koti2021Tetrad}). 
The privacy and security guarantees of these proposals require a critical \emph{non-colluding assumption} that the number of colluding servers must be below a threshold. In fact, in most of these proposals, two colluding servers are sufficient to break the protocol. 
Thus, the recent SoK on cryptographic neural-network computation~\cite{Ng2023sok} considers that ``realizing high-throughput and accurate private training without non-colluding assumptions'' the first open problem. 
A strawman design to relax the non-colluding assumption is adopting the $n$-out-of-$n$ sharing schemes~\cite{Damgrd2019NewPF, Cramer2018SPDZ2k} in which all the \downers and the \mowner participate as the computing servers. Yet, the training efficiency is significantly limited (see quantitative results in \S~\ref{subsec:evaluation:comparison-prior-art}).

\subsection{Our Contributions}

In this paper, we present \sys\footnote{\sys: \textbf{P}rivate and \textbf{e}xte\textbf{n}sible \textbf{c}ollaborative mach\textbf{i}ne \textbf{l}earning.}, a novel system that meets all three aforementioned requirements of private and extensible collaborative learning, \emph{without the non-colluding assumption}. 
At its core, \sys reduces the multi-party  collaborative learning scenario into the 2-party server/client computation paradigm: at each training step, the \mowner could choose any one of the \downers to collaborate with, and switching between \downers in different training steps introduces no extra cost. 
This learning protocol is fundamentally different from the existing MPC-based designs that exhibit a tradeoff between non-collusion and extensibility. 
Specifically, if the \mowner and \downers choose to secretly share their model and data to third-party MPC servers, extensibility is achieved as the data from multiple \downers can simultaneously contribute to the training, yet they require non-collusion among these third-party MPC servers (\eg \cite{Mohassel2017secureml, Agrawal2019quotient, Mohassel2018aby3, Wagh2019securenn, Patra2022blaze, Chaudhari2020trident, Koti2021Tetrad}). On the contrary, the \mowner and \downers can avoid the non-collusion assumption by \first acting as the computing servers themselves and \second adopting the $n$-out-of-$n$ secret sharing schemes~\cite{Damgrd2019NewPF, Cramer2018SPDZ2k}, which, however, sacrifices extensibility because including additional \downers will significantly increase the time and communication overhead. 
\sys eliminates the tradeoff between non-collusion and extensibility by constructing a secure multi-\downers-single-\mowner training process using our 2-party server/client training protocol (thus relaxing the non-collusion assumption), and meanwhile enabling the \mowner to securely obtain model weights after each training step so that it can switch to an arbitrary \downer in the next training step (thus realizing extensibility).

% This advantage is achieved by allowing the \mowner to solely obtain the weight updates. 
At a very high level, our 2-party training protocol tactically combines the primitives of the 2-party secure computation protocols and Homomorphic Encryption (HE) cryptosystem for efficient non-linear and linear computation, respectively. 
Specifically, based on the recent development of efficient private inference~\cite{Huang2022cheetah, Hao2022iron}, we achieve efficient private training by constructing novel cryptography protocols to support backpropagation (such as efficient computation of the product of two secret-shared tensors in computing weight gradient). 
% We solve the key challenge of computing the weight gradients (product of two secret-shared tensors) by using plain-cipher multiplication only for the cross terms instead of heavier cipher-cipher operations.
% \second To ensure scalable extensibility, the weight updates are revealed to the \mowner after each training step, while the \downer could (optionally) add perturbations to the weight updates to achieve stronger differential-privacy guarantees on training data. As stated, keeping the model in plaintext allows the \mowner to collaborate with multiple \downers.
% To ensure independent model deployment and extensibility, we design protocols to reveal weight updates only to the \mowner, while the \downer could (optionally) add perturbations to the weight updates to achieve stronger differential-privacy guarantees on training data. 
To further improve training efficiency, we propose a novel preprocessing design to pre-compute heavy HE-related operations (\eg matrix multiplications and convolutions) offline. This preprocessing design is fundamentally different from the preprocessing technique in \cite{Mishra2020delphi} that, given a fixed model, accelerates the private inference for a single input sample. 
Finally, as a contribution independent of our training protocols, we implement the key operators in the BFV HE cryptosystem (\cite{Fan2012bfv, Bajard2017RNSBFV}) on GPU. By fully exploiting the parallelism in the math construction, our open-sourced prototype achieves, on average, $10\times$ speedup compared with the CPU based implementation of SEAL~\cite{SEAL}. 

In summary, our contributions are:

\begin{itemize}[leftmargin=*, parsep=0.2ex, topsep=0.2ex, partopsep=0.2ex]
    \item We develop \sys, an efficient training framework to realize efficient collaborative training for the single-\mowner-multi-\downers scenario. \sys protects data privacy against the \mowner and model privacy against the \downers, without requiring the non-colluding assumption. \sys supports both training from scratch and fine-tuning on existing models. 
    \item We design a set of novel  protocols in \sys, and provide rigorous analysis to prove their security/privacy guarantees. Independent of our protocols, we implement a full GPU version of the BFV HE cryptosystem that is usable by any HE-based machine learning applications.  
    \item We extensively evaluate \sys. Our results show that: \first models trained in \sys and trained in plaintext have nearly identical test accuracies; 
    \second With our protocol and hardware optimizations, the training (from scratch) of simple convolutional networks in \sys converges within 5 hours for the MNIST task. With transfer learning, a complex classifier (\eg ResNet-50 with 23 million parameters) could be fine-tuned in \sys within 8 hours for the CIFAR10 task. These results show non-trivial performance improvements over closely related art;
    \third \sys is extensible to multiple \downers with no overhead increase. In the case where individual \downers possess heterogeneous or even biased datasets, \sys demonstrates significant performance gains by incorporating more \downers;   
    \fourth we also experimentally show that \sys is robust against existing attacks and adaptive (white-box) attacks designed specifically for our protocols.
\end{itemize}

\subsection{Related Work}
\label{subsec:intro:related-work}

The increasingly growing data collaboration among different parties sparked significant research in privately training  machine learning models. We divide prior art into three subcategories, as summarized in Table~\ref{table:summary-related-works}.

Federated learning (FL) is the pioneering machine learning scheme that treats training data privacy as the first-class citizen. In (horizontal) FL (\eg \cite{McMahan2016Communication, Bonawitz2017Practical, Bonawitz2019Towards, Konecn2016Federated, Gupta2018Distributed, Gao2020End, li2023martfl}), each client (\ie \downer) trains a local model on its private data, and then submits the obfuscated local models to the model aggregator (\ie \mowner). 
Due to the model synchronization in each training iteration, the privacy of the global model is overlooked by FL. In particular, FL disproportionately benefits \downers because each \downer learns the global model even if it only partially contributes to the training. As a result, in the case where the \mowner has an (exclusive) proprietary interest in the final model, FL is not the turnkey solution. 

To simultaneously protect data and model privacy, the community proposed a branch of distributed training approaches based on secure multiparty computation (MPC, \eg \cite{Mohassel2017secureml, Agrawal2019quotient, Mohassel2018aby3, Wagh2019securenn, Patra2022blaze, Chaudhari2020trident, Koti2021Tetrad, Dalskov2021Fantastic}). The common setting studied in this line of art is a server-assisted model where a group of \downers upload secret-shared version of their private data to (at least two) \emph{non-colluding} third-party computing servers to collectively train a model on the secret-shared data. The model is also secret-shared among all servers during the training process, perfectly protecting model privacy. 
As explained before, this type of approaches suffers from a critical tradeoff between non-collusion and extensibility.
% \fixmeb{Although independent deployability could be achieved by revealing the final model to one server, the ``non-colluding'' assumption requires the data provider to trust the computing servers not to share their information. Therefore, the only applicable solution is that the \mowner and the \downers conduct $n$-party MPC themselves~\cite{Cramer2018SPDZ2k, Damgrd2019NewPF}. Such schemes providing valid privacy guarantees against $n-1$ colluding participants introduce huge training overheads.}

The third subcategory of research depends on homomorphic encryption (HE). These works (\eg \cite{Nandakumar2019towards, Hesamifard2018Privacy, Sav2021Poseidon, Tian2022sphinx}) typically focus on outsourced training where a \downer employs a cloud service to train a model by uploading homomorphically-encrypted data, and the final model trained on the cloud is also encrypted. Afterward, the trained model is handed over to the \downer or deployed on the cloud as an online inference service usable by the \downer. This paradigm, however, is ill-suited in the single-\mowner-multi-\downers collaborative learning scenario because the plaintext model is not available to the \mowner for deployment, let alone being further fine-tuned using other \downers' data (since it requires different HE public/secret key sets).

Differential privacy (DP) is a general technique that can be combined with the aforementioned techniques in private learning. 
% mentioning in the aspect of data privacy protection. However, we take note that, differential privacy is mainly used to prevent the training data from being identified from the trained model, rather than to directly protect data against a model trainer in the \mowner/\downer scenario. Nevertheless, differential privacy could be combined with other techniques mentioned above to prevent privacy leakage in the model or its gradients.
For instance, DPSGD~\cite{Abadi2016dpsgd} could be used in FL to protect the privacy of individual data points. To defend against gradient matching attacks~\cite{Zhao2020idlg}, Tian et al. \cite{Tian2022sphinx} combines the HE approach with DP mechanics to randomize the weight updates against the cloud service provider. As shown in \S~\ref{subsec:design:linear:backward}, \sys allows the \downers to (optionally) add perturbations to the weight updates to achieve additional differential-privacy guarantees on their training data.

\begin{table*}
    \centering \small
    \begin{threeparttable}[b]\begin{tabular}{cc|c|cccc}
        \hline
        Category & \tabincell{c}{Representative\\framework} & Techniques used\tnote{*} & \tabincell{c}{Data\\privacy} & \tabincell{c}{Model\\privacy} & \tabincell{c}{Against\\collusion} & Extensibility \\

        \hline
        Horizontal FL & \cite{McMahan2016Communication, Bonawitz2017Practical, Bonawitz2019Towards} & Local SGD & $\checkmark$ & $\times$ & $\checkmark$ & $\checkmark$ \\
        Vertical FL & \cite{Gupta2018Distributed, Gao2020End, Jeon2020Privacy} & Local SGD & $\checkmark$ & $\times$ & $\checkmark$ & $\checkmark$ \\

        \hline
        MPC (2 servers) & \cite{Agrawal2019quotient, Mohassel2017secureml} & GC, SS & $\checkmark$ & $\checkmark$ & $\times$ & $\checkmark$\tnote{$\dagger$} \\
        MPC (3 servers) & \cite{Mohassel2018aby3, Patra2022blaze, Wagh2019securenn} & GC, SS & $\checkmark$ & $\checkmark$ & $\times$ & $\checkmark$\tnote{$\dagger$} \\
        MPC (4 servers) & \cite{Chaudhari2020trident, Koti2021Tetrad, Dalskov2021Fantastic} & GC, SS & $\checkmark$ & $\checkmark$ & $\times$ & $\checkmark$\tnote{$\dagger$} \\
        MPC ($n$ servers) & \cite{Damgrd2019NewPF, Cramer2018SPDZ2k} & GC, SS & $\checkmark$ & $\checkmark$ & $\checkmark$ & $\checkmark$\tnote{$\ddagger$} \\ 

        \hline
        Data outsourcing / cloud & \cite{Nandakumar2019towards, Hesamifard2018Privacy} & HE & $\checkmark$ & $\times$ & N/A & $\times$ \\
        Data outsourcing / cloud & \cite{Tian2022sphinx} & HE, DP & $\checkmark$ & $\times$ & N/A & $\times$ \\

        \hline
        \pcml & Ours & HE, SS, DP & $\checkmark$ & $\checkmark$ & $\checkmark$ & $\checkmark$ \\
        \hline
    \end{tabular}
    \begin{tablenotes}
        \footnotesize
        \item[*] SGD is for stochastic gradient descent, GC for garbled circuits, SS for secret sharing, HE for homomorphic encryption and DP for differential privacy.
        \item[$\dagger$] {
        If \mowner and \downers choose to secretly share their model and data to third-party MPC servers, extensibility is achieved but the approaches are secure only if these servers are not colluding with each other. % Alternatively, if \mowner and \downers themselves act as servers, extensibility is lost; specifically in the 3/4 party frameworks, any 2 colluding parties lead to privacy loss of all remaining parties.}
        }
        \item[$\ddagger$] {The general $n$-PC protocol against collusion suffers from a scalability problem: including more parties would greatly increase the computation overhead. See \S~\ref{subsec:evaluation:comparison-prior-art} for experimental results.}
    \end{tablenotes}
    \caption{Comparison of prior art related with private collaborative training.}
    \label{table:summary-related-works}
\end{threeparttable}\end{table*}

\subsection{Assumptions and Threat Model}
We consider the scenario where one \mowner and multiple \downers participate in a collaborative machine learning system. All parties are \emph{semi-honest}, and any of them \emph{may collude} to infer the private information (model parameters of \mowner or the data owned by the \downers) of other parties. The architecture of the learning model is known by all parties. %whereas the \downer's private data and the parameters of \mowner's DNN model are hidden. 
As described in the previous sections, we reduce this $n$-party setting to the $2$-party paradigm where only the \mowner and one of the \downers interact at one single training step. By establishing our privacy guarantees over the two-party protocols, we obtain the security for the general $n$-party extensible machine learning scheme \emph{without the non-colluding assumption}.

During training, a \downer obtains the model prediction result of its own training dataset. Like previous works~\cite{Sav2021Poseidon, Huang2022cheetah, Mohassel2018aby3, Koti2021Tetrad}, we do not consider inference attacks based solely on prediction results (\eg \cite{Tramer2016Stealing, Shokri2017Membership}). 

%unlike the traditional MPC approaches where all parties participate at once, 
\iffalse
    The base threat model considered in \sys includes two semi-honest parties: a \downer and a \mowner, where the \downer provides the training data and the \mowner provides the initial NN weights. Both parties comply with predefined protocols but are curious to infer private information from each other.
    In the training procedure, the two parties collaboratively execute a training protocol that provides provable security guarantees. 
    \mowner could only obtain weight updates, while \downer could only obtain prediction results on its own dataset to assist the training process. 
    Like previous works~\cite{Sav2021Poseidon, Huang2022cheetah, Mohassel2018aby3, Koti2021Tetrad}, we do not consider inference attacks based solely on prediction results (\eg \cite{Tramer2016Stealing, Shokri2017Membership}). 
    % Indeed, the \mowner can integrate existing defenses (\eg \cite{Orekondy2020Prediction, Lee2018Defending}) to mitigate concerns of these attacks. 
    The extended threat model in \sys with multiple \downers is derived natively from the base threat model, since \mowner only interacts with one \downer at a time in our training protocol. This derivation does not require the ``non-colluding'' assumption. % as in the previous MPC-based works (\eg \cite{Mohassel2018aby3, Chaudhari2020trident, Koti2021Tetrad, Patra2022blaze}).
\fi

\section{Preliminaries}

\subsection{Notations}

Vectors, matrices and tensors are denoted by boldfaced Latin letters (\eg $\vector{W}, \vector{x})$, while polynomials and scalars are denoted by italic Latin letters (\eg $W, x$). All the gradients are relative to the loss function $L$ of the deep learning task, \ie $\grad{W} = \grad{W} L = \partial{L}/\partial{W}$. $[B]$ denotes the integer interval $[0, B) \bigcap \mathbb{Z}$. Symbol $\vector{x}' \sim \vector{x}$ means that tensor $\vector{x}'$ is of the same shape as $\vector{x}$. If not explicitly specified, a function (\eg $\relu$) applied to a tensor is the function applied to all of its elements.
% We provide a full list of notations in Table~\ref{table:notation-list}. 
% including the ones introduced in the following subsections.

\iffalse
    \begin{table}
        \centering
        \begin{tabular}{c|l}
            \hline 
            Notation & Meaning \\ \hline
            $x, W$ & Scalars and polynomials \\
            $\vector{x}, \vector{W}$ & Vectors, matrices and tensors \\
            $[B]$ & Integer interval $[0, B) \bigcap \mathbb{Z}$ \\
            $\grad{x}$ & Gradient of $x$ \wrt the loss function $L$ \\
            $\enc{x}$ & Homomorphic encryption of $x$ \\
            $x^S, x_S$ or $\angleshare{x}_S$ & \mowner's share of $x$ \\
            $x^C, x_C$ or $\angleshare{x}_C$ & \downer's share of $x$ \\
            \hline
        \end{tabular}
        \caption{Notation list}
        \label{table:notation-list}
    \end{table}
\fi

\subsection{Lattice-based Homomorphic Encryption}

\sys uses the BFV leveled homomorphic encryption cryptosystem based on the RLWE problem with residual number system (RNS) optimization~\cite{Fan2012bfv,Bajard2017RNSBFV}. In detail, the BFV scheme is constructed with a set of parameters $\{N, t, q\}$ such that the polynomial degree $N$ is a power of two, and $t$, $q$ represent plaintext and ciphertext modulus, respectively. The plaintext space is the polynomial ring $\ring_{t, N} = \mathbb{Z}_t[X]/(X^N+1)$ and the ciphertext space is $\ring_{q,N}^2$. Homomorphism is established on the plaintext space $\ring_{t, N}$, supporting addition and multiplication of polynomials in the encrypted domain. We denote the homomorphically encrypted ciphertext of polynomial $x$ as $\enc{x}$. For a tensor $\vector{x}$, encryption $\enc{\vector{x}}$ requires an encoding method to first convey the tensor into the polynomial ring, which we cover in Appendix~\ref{appendix:polynomial-encoding}.

\subsection{Additive Secret Sharing and Fixed-point Representation}\label{subsec:prelim:additive-secret-sharing}

We utilize the additive secret-sharing scheme upon the ring $\mathbb{Z}_t$ (integers modulo $t$) with $t=2^\ell$. 
If an integer $x \in \integers_t$ is shared between a pair of \mowner and \downer, then \mowner (Party 0) has $\share{x}_0$ and \downer (Party 1) has $\share{x}_1$ such that $x = \share{x}_0 + \share{x}_1$. For simplicity, $\share{x}$ denotes $x$ is shared between the two parties.

Machine learning typically involves decimal numbers rather than integers. To adapt to the BFV scheme and integer-based secret sharing, we use a fixed-point representation of decimal numbers.
% To suit the BFV scheme and integer secret sharing, we use . 
A decimal $\tilde{x} \in \mathbb{R}$ is represented as an integer $x = \mathsf{Encode}(\tilde{x}) = \lfloor \tilde{x} \cdot 2^f \rfloor \in \mathbb{Z}$, with a precision of $f$ bits. After every multiplication, the precision inflates to $2f$, and a truncation is required to keep the original precision. Since we use $\mathbb{Z}_t$ rather than $\mathbb{Z}$, we require all intermediate results in their decimal form $\tilde{x} \in \mathbb{R}$ not to exceed $\pm t/2^{2f+1}$, to prevent overflow. In the rest of the paper, unless stated otherwise, all scalars and elements of tensors are in $\mathbb{Z}_t$.

\subsection{Neural Network Training}

A typical neural network (NN) consists of several \emph{layers}, denoted by a series of functions $f_i(\vector{x}), i\in {1, 2, \cdots, \ell}$, and for an input sample $\vector{x}$, the neural network's output is $\vector{y} = f_\ell(f_{\ell-1}(\cdots f_1(\vector{x})\cdots))$, \ie feeding the input through each layer. It is easy to extend the above modeling to neural networks with branches. The NN layers are roughly divided into two categories: linear layers and non-linear layers. Typically, linear layers (\eg fully connected layers and convolution layers) contain trainable parameters that could be \emph{learned} from the input samples and their corresponding labels. Given a linear layer $f_i(\vector{x}; \vector{W}_i, \vector{b}_i)$ with trainable parameters weight $\vector{W}_i$ and bias $\vector{b}_i$, we denote it as a function $f_i(\vector{x})$ when there is no confusion. 

\parab{Forward propagation} Feeding input through each layer is called \emph{forward propagation}. In forward propagation, linear layers could be abstracted as
\begin{equation}
    \label{eq:linear-layer-abstraction}
    \vector{y} = f(\vector{x}) = f(\vector{x}; \vector{W}, \vector{b}) = \vector{W}\circ \vector{x} + \vector{b},
\end{equation}
where $\circ$ is a \emph{linear operator} satisfying the following constraint
\begin{equation}
    (\vector{u}_0 + \vector{u}_1) \circ (\vector{v}_0 + \vector{v}_1) = \vector{u}_0 \circ \vector{v}_0 + \vector{u}_1 \circ \vector{v}_0 + \vector{u}_0 \circ \vector{v}_1 + \vector{u}_1 \circ \vector{v}_1.
\end{equation}
For example, in fully connected layers, $\circ$ represents the matrix-vector multiplication:
\begin{equation*}
    \vector{x} \in \integers_t^{n_i}, \vector{W} \in \integers_t^{n_o \times n_i}, \vector{b} \in \integers_t^{n_o}, \vector{y} = \vector{W} \circ \vector{x} + \vector{b} = \vector{W} \cdot \vector{x} + \vector{b};
\end{equation*}
in 2-dimensional convolution layers, $\circ$ is the convolution operation:
\begin{equation*}\begin{array}{c}
     \vector{x} \in \integers_t^{c_i\times h \times w}, \vector{W} \in \integers_t^{c_o \times c_i \times s\times s} \\
     \vector{W} \circ \vector{x} = \conv(\vector{x}; \vector{W}) \in \integers_t^{c_o \times (h-s+1) \times (w-s+1) } \\
\end{array}\end{equation*}

By contrast, the output of non-linear layers (\eg ReLU function and max pooling layers) is completely determined by the input. 

\parab{Backpropagation}
The goal of training a neural network is to find a set of trained parameters such that some loss function $L(y; t)$ calculated on network output $y$ and ground truth $t$ (labels) is minimized across a dataset $\mathcal{D} = \{(\vector{x}_i, t_i)\}_i$. In practice, one can draw a batch of $B$ input samples $(\vector{X}, \vector{t}) = \{(\vector{x}_i, t_i)\}_{i\in[B]}$ and calculate the mean loss $L = L(\vector{Y}, \vector{t}) = \frac{1}{B}L(\vector{y}_i; t_i)$. Applying the chain rule, one can reversely compute the partial derivative of each trainable weight with respect to $L$, \ie $\nabla_{\vector{W}} = \partial L / \partial \vector{W}, \nabla_{\vector{b}} = \partial L / \partial \vector{b}$ and update the weights accordingly. Since the gradients are propagated reversely from the last layer to the first layer, this process is called \emph{backpropagation}.

\section{Training in \pcml}
In this section, we introduce our core designs to realize private-preserving training of neural networks in \pcml. We begin by introducing the high-level procedure for our 2-party training protocol. Next, we explain the detailed protocols for training  linear and non-linear layers. 
% Basically, we adopt the methods widely used in efficient private inference for forward propagation but optimize for batched inputs rather than a single input. We further extend the protocols to the backward propagation which involves gradient computing. 
Then, we introduce an optimization method that offloads the computationally heavy  operations to an offline phase. Finally, we demonstrate how to extend our methods for 2-PC to the scenario of arbitrary number of \downers.

\subsection{\pcml Training Overview} 
In the training phase of \pcml, the \downer holds all training data including labels, while the \mowner holds all trainable parameters. The network architecture is known to both parties. Except for the final output $y$ which is revealed to the \downer, all intermediate outputs of each layer are secret-shared between \downer and \mowner.

In the forward propagation, the \downer draws a batch of input data $\vector{X} = \{\vector{x}_i\}_{i\in[B]}$ to feed into the neural network, and the corresponding labels are $\vector{t}=\{t_i\}_{i\in[B]}$. For input $\vector{X}$, denote $\vector{X}_{0} = \vector{X}$. Evaluation of each layer $f_i$ is denoted as $\vector{X}_{i} = f_i(\vector{X}_{i-1})$. At the beginning, \mowner takes $\share{\vector{X}_0}_0 = \vector{0}$ and \downer takes $\share{\vector{X}_0}_1 = \vector{X}$. This is a secret-sharing of $\share{\vector{X}_i} = \share{\vector{X}_i}_0 + \share{\vector{X}_i}_1$ for $i = 0$. We keep this invariant form of secret sharing for all layers $f_i$. Essentially, $f_i$ is a secure computation protocol operating on secret shares: $f_i$ takes shares $\share{\vector{X}_{i-1}}$ from the two parties and produces shares $\share{\vector{X}}_i$ to both parties such that
$$\share{\vector{X}_i}_0 + \share{\vector{X}_i}_1 = f_i(\share{\vector{X}_{i-1}}_0 + \share{\vector{X}_{i-1}}_1).$$
The construction of such a protocol is introduced in \S~\ref{subsec:design:linear:forward}. The \mowner reveals the final propagation output $\share{\vector{Y}}_0 = \share{\vector{X}_{\ell}}_0$ to \downer, based on which \downer reconstructs the prediction result $\vector{Y}$ to calculate the loss function $L(\vector{Y}, \vector{t})$.  

In the backpropagation, each derivative $\nabla_{\vector{X}_{i}}$ is shared:
$$\share{\grad{\vector{X}_{i-1}}}_0 + \share{\grad{\vector{X}_{i-1}}}_1 = (\share{\grad{\vector{X}_i}}_0 + \share{\grad{\vector{X}_i}}_1) \odot_x \frac{\partial f_i(\vector{X}_{i-1})}{\partial \vector{X}_{i-1}}$$
With secret shared values of $\nabla_{\vector{X}_{i}}$ and $\vector{X}_{i}$, the two parties could collaborate to produce the gradients of trainable parameters in linear layers, \ie $\grad{\vector{W}_i}, \grad{\vector{b}_i}$:
$$
    \begin{aligned}
    \grad{\vector{b}_i} & = \grad{\vector{X}_i} \odot_b \frac{\partial f_i(\vector{X}_{i-1}; \vector{W}_i, \vector{b}_i)}{\partial \vector{b}_{i}} \\
    \grad{\vector{W}_i} & = \grad{\vector{X}_i} \odot_W \frac{\partial f_i(\vector{X}_{i-1}; \vector{W}_i, \vector{b}_i)}{\partial \vector{W}_{i}} \\
    \end{aligned}
$$
These weight gradients are revealed to \mowner to update the parameters (introduced in \S~\ref{subsec:design:linear:backward}). 
% We illustrate the backpropagation in Algorithm~\ref{alg:backward-propagation}.
Note that the linear operators $\odot_x, \odot_W, \odot_b$ could be deduced from the forward propagation formula $f(\vector{X}_{i-1}) = \vector{X}_i$, according to the chain rule of derivatives. The challenge in backpropagation is how to reveal the weight gradients $\grad{\vector{W}_i}, \grad{\vector{b}_i}$ \emph{only to the \mowner}, while protecting the privacy of both intermediate outputs ($\vector{X}_i, \grad{\vector{X}_i}$) and weights themselves ($\vector{W}_i, \vector{b}_i$).

As a feed-forward network could be decomposed into a series of layers, in the following, we discuss how to evaluate one single layer in the neural network. Since the trainable parameters of a neural network are in linear layers, we mainly focus on the training protocols of linear layers. We address the evaluations of non-linear layers in \S~\ref{subsec:design:nonlinear}.

\subsection{Linear Protocols}
\label{subsec:design:linear}

We first introduce the forward propagation and backpropagation protocols of linear layers. For forward propagation, we adopt the recent development of efficient private preference~\cite{Huang2022cheetah,Hao2022iron,liu2023llms} as a strawman design, and then extend it to support \emph{batched inference instead of just single inference}. Then we elaborate on the gradient computation protocol in the backpropagation.
% In~\ref{subsec:preprocessing-optimization}, we will introduce a preprocessing technique to accelerate these linear protocols.

\subsubsection{Forward Propagation}
\label{subsec:design:linear:forward} 

Homomorphic evaluations of linear layers are fundamental to enabling privacy-preserving machine learning. We summarize the high-level protocol in Algorithm~\ref{alg:homomorphic-linear-evaluation}. Note that because the HE ciphertexts are converted to secret-shares after each linear layer evaluation, we need only to support one HE multiplication in the BFV parameters and do not require bootstrapping.
% Concretely, the \downer sends the encrypted version of its input share, $\enc{\vector{X}_C}$, to the \mowner. The \mowner then reconstructs the input in the encrypted domain as $\enc{\vector{X}} = \vector{X}_S + \enc{\vector{X}_C}$. 
$\circ$ is a linear operator that could be decomposed into basic arithmetic addition and multiplications (\eg matrix multiplication or convolution). 
% Therefore, using homomorphic encryption, the \mowner could perform the linear operation $\circ$ on this encrypted input and adds a random mask $\vector{s}$ to the result. The resulting value, $\enc{\vector{W} \circ \vector{X} - \vector{s}}$, is sent back to the \downer as its share $\vector{Y}_C$, while the \downer retains $\vector{Y}_S = \vector{s} + \vector{b}$.

\begin{algorithm}[t]
    \caption{Evaluation of linear layer $f$}
    \label{alg:homomorphic-linear-evaluation}
    \KwIn{The input $\share{\vector{X}}$ shared between \mowner and \downer; \mowner holds the weights $\vector{W}$ and the bias $\vector{b}$.}
    \KwOut{The output shares $\share{\vector{Y}}$ of $\vector{Y}=\vector{W}\circ \vector{X} + \vector{b}$.}
    \downer sends encrypted $\enc{\share{\vector{X}}_1}$ to \mowner; \\
    \mowner evaluates $\enc{\vector{W} \circ \vector{X}} = \vector{W} \circ (\enc{\share{\vector{X}}_1} + \share{\vector{X}}_0)$ using homomorphic plaintext-ciphertext additions and multiplications; \\
    \mowner chooses random mask $\vector{s}$ and calculates $\enc{\share{\vector{Y}}_1} = \enc{\vector{W} \circ \vector{X}} - \vector{s}$; \mowner sends $\enc{\share{\vector{Y}}_1}$ back for decryption; \\ 
    \downer outputs $\share{\vector{Y}}_1$; \mowner outputs $\share{\vector{Y}}_0 = \vector{s} + \vector{b}$.
\end{algorithm}

The substantial part of computation in Algorithm~\ref{alg:homomorphic-linear-evaluation} lies in evaluating $\vector{W} \circ \enc{\vector{X}}$. Previous art in private NN inference has developed different methods to evaluate $\vector{W}\circ\enc{\vector{X}}$ for matrix-vector multiplication and 2d-convolutions. For example, \cite{Juvekar2018gazelle} uses the SIMD support of BFV cryptosystem and designs a hybrid method for ciphertext matrix multiplication and convolution, while \cite{Huang2022cheetah} and \cite{Hao2022iron} exploit the polynomial homomorphism to efficiently evaluate the ciphertext dot product in linear layers. Note that private inference of \cite{Huang2022cheetah} is actually forward propagation of batch size $B=1$, but we adapt these primitives to \emph{batched inputs} with $B>1$. The details are deferred to Appendix~\ref{appendix:polynomial-encoding}.
% In \S~\ref{subsec:design:preprocessing-optimization}, we present an optimization technique that completely offloads the HE operations to the preprocessing stage, \ie all computation in online training is in plaintext. 

\subsubsection{Backpropagation}
\label{subsec:design:linear:backward}

In backpropagation, with secret shares $\share{\grad{\vector{Y}}}$, we need to compute three types of gradients, $\grad{\vector{X}}, \grad{\vector{b}}$ and $\grad{\vector{W}}$. The gradient of the inputs $\share{\grad{\vector{X}}}$ is again secret-shared between two parties, while the gradient of the parameters $\grad{\vector{W}}, \grad{\vector{b}}$ are revealed \emph{only to} the \mowner. Based on Equation~(\ref{eq:linear-layer-abstraction}), we can deduce the formula of these three types of gradients respectively. 

\parab{Calculation of $\grad{\vector{X}}$}
For $\grad{\vector{X}} = \frac{\partial f(\vector{X}; \vector{W}, \vector{b})}{\partial \vector{X}} \odot_x \grad{\vector{Y}} = \vector{W} \odot_x \grad{\vector{Y}}$, it takes a form similar to the forward propagation procedure: the two parties input secret shares $\share{\grad{\vector{Y}}}$ and the \mowner provides weights $\vector{W}$; the output is a linear operation $\odot_x$ on $\grad{\vector{Y}}$ and $\vector{W}$, and the output is shared between two parties. Therefore, we can use a protocol similar to Algorithm~\ref{alg:homomorphic-linear-evaluation} to calculate the shares $\share{\grad{\vector{X}}}$. 

\parab{Calculation of $\grad{\vector{b}}$} $\grad{\vector{b}} = \grad{\vector{Y}} \odot_b \frac{\partial f(\vector{X}; \vector{W}, \vector{b})}{\partial \vector{b}}$ takes the form of summation across all $B$ samples.\footnote{For fully connected layers, $\grad{\vector{b}}$ is simply a summation over the batch size $B$ dimension of $\grad{\vector{Y}} \in \mathbb{R}^{B \times n_o}$. $n_o$ is the output size of the FC layer. For 2d-convolution layers, $\grad{\vector{b}}$ is a summation of $\grad{\vector{Y}} \in \mathbb{R}^{B \times c_o \times (h-s+1) \times (w-s+1)}$ over three dimensions: the batch size and output image height and width dimensions. $c_o, h, w, s$ are output channels, input image height, width, and kernel size of the 2d-convolution layer.}
Therefore, the two parties can perform  summation locally on their shares respectively, and the \downer sends its share $\share{\grad{\vector{b}}}_1$ to \mowner for reconstruction.

\parab{Calculation of $\grad{\vector{W}}$} It is more challenging to calculate $\grad{\vector{W}}$ and reveal it to the \mowner without leaking information about $\vector{X}$ or $\grad{\vector{Y}}$, since both operands of $\odot$ are in secret-shared form: (for simplicity we use $\odot$ instead of $\odot_W$):
\begin{equation}
\grad{\vector{W}} = \grad{\vector{Y}} \odot \frac{\partial f(\vector{X}; \vector{W}, \vector{b})}{\partial \vector{W}} = \grad{\vector{Y}} \odot \vector{X}
\end{equation}

A straightforward solution is to let the \downer send encrypted $\enc{\share{\vector{X}}_1}$ and $\enc{\share{\grad{\vector{Y}}}_1}$ to the \mowner. The \mowner calculates the gradient in the encrypted domain as
\begin{equation}
    \enc{\grad{\vector{W}}} = 
    % \enc{\grad{\vector{Y}} } \odot \enc{\vector{X}}
    (\share{\grad{\vector{Y}}}_0 + \enc{\share{\grad{\vector{Y}}}_1}) \odot (\share{\vector{X}}_0 + \enc{\share{\vector{X}}_1}).
\end{equation}
The \mowner samples random mask $\vector{s}$ and sends back perturbed $\enc{\grad{\vector{W}} - \vector{s}}$ for decryption (to protect the model update against the \downer). The \downer finally sends the decryption result and \mowner could recover plaintext $\grad{\vector{W}}$.

However, in the above procedure, we need to evaluate multiplication \emph{between two ciphertexts}, which is much more expensive than \emph{ciphertext-plaintext} multiplication. 
We use the linearity of the binary operator $\odot$ to eliminate this requirement. In particular, $\grad{\vector{W}}$ could be seen as a summation of four terms: two cross terms $\share{\grad{\vector{Y}}}_0 \odot \share{\vector{X}}_1, \share{\grad{\vector{Y}}}_1 \odot \share{\vector{X}}_0$, and two “local” terms $\share{\grad{\vector{Y}}}_0 \odot \share{\vector{X}}_0, \share{\grad{\vector{Y}}}_1 \odot \share{\vector{X}}_1$ which could be calculated locally by the two parties respectively. Therefore, the \mowner could simply evaluate the cross term 
$$\enc{\grad{\vector{W}}^{\mathsf{cross}}} = \share{\grad{\vector{Y}}}_0 \odot \enc{\share{\vector{X}}_1} + \enc{\share{\grad{\vector{Y}}}_1} \odot \share{\vector{X}}_0$$
and sends back $\enc{\grad{\vector{W}}^{\mathsf{cross}} - \vector{s}}$ for decryption. The \downer then returns the decrypted result $\widetilde{\grad{\vector{W}}} = \grad{\vector{W}}^{\mathsf{cross}} - \vector{s} + \share{\grad{\vector{Y}}}_1 \odot \share{\vector{X}}_1$, and the \mowner could recover the full $\grad{\vector{W}}$ by adding its local term $\share{\grad{\vector{Y}}}_0 \odot \share{\vector{X}}_0$ and mask $\vector{s}$. We illustrate this protocol in Algorithm~\ref{alg:weight-gradient-calculation}.

\begin{algorithm}[t]
    \caption{Weight gradient $\grad{\vector{W}}$ calculation}
    \label{alg:weight-gradient-calculation}
    \setlength{\belowdisplayskip}{0.3ex} \setlength{\belowdisplayshortskip}{0.3ex}
    \setlength{\abovedisplayskip}{0.3ex} \setlength{\abovedisplayshortskip}{0.3ex}
    \KwIn{\mowner and \downer input secret shares of $\share{\vector{X}}$ and $\share{\grad{\vector{Y}}}$.}
    \KwOut{\mowner receives $\grad{\vector{W}} = \grad{\vector{Y}} \odot \vector{X}$.}
    \downer sends encrypted $\enc{\share{\vector{X}}_1}, \enc{\share{\grad{\vector{Y}}}_1}$ to \mowner; \\
    \mowner evaluates 
    $$\enc{\grad{\vector{W}}^{\mathsf{cross}}} = \share{\grad{\vector{Y}}}_0 \odot \enc{\share{\vector{X}}_1} + \enc{\share{\grad{\vector{Y}}}_1} \odot \share{\vector{X}}_0$$ \\
    \mowner chooses random mask $\vector{s}$ and sends $\enc{\grad{\vector{W}}^{\mathsf{cross}} - \vector{s}}$ back for decryption; \\ 
    \downer evaluates
    $$\widetilde{\grad{\vector{W}}} = \grad{\vector{W}}^{\mathsf{cross}} - \vector{s} + \share{\grad{\vector{Y}}}_1 \odot \share{\vector{X}}_1$$ \\
    \downer adds a perturbation $\vector{e}$ to $\widetilde{\grad{\vector{W}}}$;\label{alg:step:weight-dp} \\
    \mowner finishes by calculating
    $$\grad{\vector{W}} = \widetilde{\grad{\vector{W}}} + \vector{s} + \share{\grad{\vector{Y}}}_0 \odot \share{\vector{X}}_0$$
\end{algorithm}

\parab{Incorporating Differential Privacy (DP) for Weight Updates}
To ensure independent model deployment, weight updates (\ie $\grad{\vector{W}}, \grad{\vector{b}}$) should be revealed to \mowner in plaintext. 
Recent art \cite{Zhu2019DLG, Zhao2020idlg, Geiping2020inverting} argues that these updates may leak private information about the training data. To address this concern, 
we integrate the DP mechanism into our framework. Specifically, we allow \downer to add perturbations to the gradients, as shown in Step~\ref{alg:step:weight-dp} of Algorithm~\ref{alg:weight-gradient-calculation}: 
\begin{equation}
    \vector{e} \leftarrow \mathcal{N}\left(0, \frac{\sigma^2 C^2}{B} \mathbb{I}\right),
\end{equation}
where $B$ is the batch size, and $C$ is the estimated the upper bound of L2 norm of the gradients \wrt a single sample. This noise term is added to $\widetilde{\grad{\vector{W}}}$ in Algorithm~\ref{alg:weight-gradient-calculation} and to the summation of \downer's own share of $\grad{\vector{b}}$. At a high level, our design is a secret-shared version of the DPSGD algorithm \cite{Abadi2016dpsgd} (by considering $C$ as the bound to clip gradients). 
% slight because we do not clip the gradients but assume an upper bound to limit the effect of gradients \wrt each individual input sample.

\subsection{Non-linear Protocols}
\label{subsec:design:nonlinear}

To evaluate non-linear layers, \cite{Huang2022cheetah} proposes various MPC-based protocols utilizing OT extension~\cite{Kolesnikov2013ImprovedOE, Yang2020Ferret}. As \cite{Huang2022cheetah} focuses on private inference (\ie forward propagation), we extend their work to enable the backpropagation of gradients through the non-linear layers. Specifically, we implemented the backpropagation functionalities for rectified linear function (ReLU) and 2-dimensional average pooling layers. We also use truncation protocols to support the multiplication of fixed-point secret-shared numbers. For the concrete construction of these protocols, we refer the readers to Appendix~\ref{appendix:nonlinear}.

\parab{ReLU function} $\relu(x) = \max\{0, x\}$ is an activation function widely used in neural networks. 
% The $\relu$ function applied to a tensor is applying the function to each of its elements. 
In MPC, $\relu$ is effectively implemented by a composition of the derivative $\drelu(x) = \mathbf{1}\{x > 0\}$ and a multiplication, as $\relu(x) = \drelu(x) \cdot x$.

To support backpropagation for $\relu$, we need to evaluate 
\begin{equation}
    \label{eq:relu-bp}
    \grad{x} = \drelu(x) \cdot \grad{y}.
\end{equation}
We notice that the $\drelu(x)$ is already calculated in the forward propagation. Its result is secret shared in boolean form between the two parties as $\angleshare{d}_0 \oplus \angleshare{d}_1 = d = \drelu(x), \angleshare{d}_0, \angleshare{d}_1 \in \{0, 1\}$ where $\oplus$ is logical XOR. Therefore, we store this result in the forward propagation and reuse it in backpropagation to produce the secret shares of $\grad{x}$ in Equation (\ref{eq:relu-bp}).

\parab{2D Average Pooling Layer} At a high level, the 2D average pooling layer with kernel size $s$ outputs mean value of every adjacent $s\times s$ pixels. We use the division protocol provided by the framework of \cite{Huang2022cheetah} to support the forward and backward propagation of average pooling layers.

\parab{Truncation} Since we use the fixed-point representation in $\mathbb{Z}_t$, to avoid overflow, we need to reduce the precision from $2f$ to $f$ bits after every multiplication. In the forward propagation, we truncate the multiplication results of fully connected or convolutional layers after the activation function $\relu$. In the backward propagation, the gradients of intermediate outputs $\grad{\vector{X}}$ are truncated after being propagated through any linear layer.

\subsection{Preprocessing Optimization}\label{subsec:preprocessing-optimization}

In both forward and backward propagation, we notice that the substantial part of computation lies in the plaintext-ciphertext evaluation of linear operations $\circ$ and $\odot$, \ie $\vector{W} \circ \enc{\vector{X}}$ in Algorithm~\ref{alg:homomorphic-linear-evaluation}, and $\enc{\grad{\vector{W}}^{\mathsf{cross}}} = \share{\grad{\vector{Y}}}_0 \odot \enc{\share{\vector{X}}_1} + \enc{\share{\grad{\vector{Y}}}_1} \odot \share{\vector{X}}_0$ in Algorithm~\ref{alg:weight-gradient-calculation}. These evaluations are performed in every training iteration. We propose an optimization method to reduce the total number of such evaluations required in training. Specifically, originally we need $O(T)$ (number of training iterations) online evaluations, while the optimized approach only performs $O(m^2)=O(1)$ offline evaluations ($m$ is a constant agreed by the two parties) and is completely free of HE computation in the online phase.

We start by constructing a general protocol $\productProtocol(\circ, \vector{u}, \vector{v})$ for calculating the shares of $\vector{u} \circ \vector{v}$ for any linear operator $\circ$, where $\vector{u}$ and $\vector{v}$ are private data owned by \mowner and \downer, respectively.

\parab{For Fixed $\vector{u}$ and Variable $\vector{v}$}
We first consider a fixed $\vector{u}$. Inspired by a series of art~\cite{Beaver1992Efficient, Mohassel2017secureml, Mishra2020delphi}, we observe that: if for some random $\vector{v}'$, the product $\share{\vector{u} \circ \vector{v}'}$ could be evaluated and shared beforehand, then given the real $\vector{v}$, two parties can compute $\share{\vector{u} \circ \vector{v}}$  without HE at all. This protocol can be summarized as follows:

\begin{itemize}[leftmargin=*]
    \item \textbf{Preprocessing phase} (prepares shares of $\share{\vector{u}\circ\vector{v}'}$):
    \begin{itemize}
        \item[(1)] \downer chooses random mask $\vector{v}' \sim \vector{v}$ and sends $\enc{\vector{v}'}$;
        \item[(2)] \mowner chooses random mask $\vector{s} \sim \vector{u} \circ \vector{v}$, evaluates $\enc{\vector{u} \circ \vector{v}' - \vector{s}} = \vector{u} \circ \enc{\vector{v}'} - \vector{s}$ and sends it back. Thus, the two parties get shares of $\vector{u} \circ \vector{v}'$: $\share{\vector{u} \circ \vector{v}'}_0 = \vector{s}, \share{\vector{u} \circ \vector{v}'}_1 = \vector{u} \circ \vector{v}' - \vector{s}$.
    \end{itemize}
    \item \textbf{Online phase} (produces shares of $\share{\vector{u}\circ\vector{v}}$):
    \begin{itemize}
        \item[(3)] \downer sends masked $\vector{v} - \vector{v}'$ to \mowner, outputs $\share{\vector{u} \circ \vector{v}}_1 = \share{\vector{u} \circ \vector{v}'}_1;$
        \item[(4)] \mowner calculates and outputs
        $\share{\vector{u} \circ \vector{v}}_0 = \vector{u} \circ (\vector{v} - \vector{v}') + \share{\vector{u} \circ \vector{v}'}_0.$
    \end{itemize}
\end{itemize}

Although this protocol could offload HE operations to the preprocessing phase for \emph{a single evaluation} of $\vector{u} \circ \vector{v}$, we cannot extend it to \emph{multiple evaluations} of $\vector{u} \circ \vector{v}_i$ with different $\vector{v}_i$. Specifically, if we used the same mask $\vector{v}'$ for two different $\vector{v}_1$ and $\vector{v}_2$, then in the online phase the \downer would receive $\vector{v}_1 - \vector{v}'$, and $\vector{v}_2 - \vector{v}'$. Thus the difference $\vector{v}_1 - \vector{v}_2$ would be leaked to \mowner. 

To address this issue, we propose using multiple masks $\vector{v}'_i$. In the preprocessing phase, the two parties generate $m$ shared product $\vector{u} \circ \vector{v}'_i$ for different $\vector{v}'_i, i \in [m]$. In the online phase, for input $\vector{v}$, the \downer chooses $m$ non-zero scalars $k_i$ and sends a masked version of $\vector{v}$,
\begin{equation}
\tilde{\vector{v}} = \vector{v} - \sum_{i\in[m]} k_i \cdot \vector{v}'_i
\end{equation}
to \mowner. The two parties output
\begin{equation}
    \begin{aligned}
        \share{\vector{u} \circ \vector{v}}_0 & = \vector{u} \circ \tilde{\vector{v}} + \sum_{i\in[m]} k_i \cdot \share{\vector{u} \circ \vector{v}'_i}_0 \\
        \share{\vector{u} \circ \vector{v}}_1 & = \sum_{i\in[m]} k_i \cdot \share{\vector{u} \circ \vector{v}'_i}_1 \\
    \end{aligned}
\end{equation}

\parab{For Variable $\vector{u}$} 
Now we consider the case where $\vector{u}$ is a variable (\ie not determined at the preprocessing phase). Similar to \downer, \mowner also masks its $\vector{u}$ with $m$ masks $\vector{u}'_i$. The online phase could be viewed as a two-step process: \first the two parties generate shares $\angleshare{\vector{u} \circ \vector{v}'_j}$ for all $j \in [m]$; \second they apply these shares to further produce shares $\share{\vector{u} \circ \vector{v}}$. We present the full protocol in Algorithm~\ref{alg:preprocessing-technique}. In practice, Step~\ref{alg:preprocessing-technique:step:online-mowner-send} and Step~\ref{alg:preprocessing-technique:step:online-downer-send} in Algorithm~\ref{alg:preprocessing-technique} can be merged in parallel, as they are independent. 

\begin{algorithm}[t]
    \caption{$\productProtocol(\circ, \vector{u}, \vector{v})$: Preprocessing optimization for calculating the shares of $\vector{u} \circ \vector{v}$}
    \label{alg:preprocessing-technique}
    \KwIn{A predefined linear operation $\circ$; in the online phase, \mowner inputs $\vector{u}$ and \downer inputs $\vector{v}$.}
    \KwOut{The two parties receive shares of $\share{\vector{u} \circ \vector{v}}$.}
    \setlength{\belowdisplayskip}{0.3ex} \setlength{\belowdisplayshortskip}{0.3ex}
    \setlength{\abovedisplayskip}{0.3ex} \setlength{\abovedisplayshortskip}{0.3ex}
    \textbf{Preprocessing $\productProtocolPrep(\circ)$:} \\
    \setlength{\leftskip}{1em}
    \mowner selects $m$ random masks $\vector{u}'_i \sim \vector{u}, i \in [m]$; \\
    \downer selects $m$ random masks $\vector{v}'_j \sim \vector{v}, j \in [m]$, and sends their encryption $\enc{\vector{v}'_j}$ to \mowner;  \\
    \mowner selects $m^2$ masks $\vector{s}_{ij} \sim (\vector{u} \circ \vector{v}), i,j\in [m]$; \\
    \mowner evaluates $\enc{\angleshare{\vector{u}'_i \circ \vector{v}'_j}_1} = \vector{u}'_i \circ \enc{\vector{v}'_j} - \vector{s}_{ij}$ for $i,j \in [m]$, and sends them back for decryption; \\
    \mowner and \downer keeps shares of $\angleshare{\vector{u}'_i \circ \vector{v}'_j}$ for all $i, j \in [m]$
    $$
        \begin{aligned}
            \angleshare{\vector{u}'_i \circ \vector{v}'_j}_0 & = \vector{s}_{ij} \\
            \angleshare{\vector{u}'_i \circ \vector{v}'_j}_1 & = \vector{u}'_i \circ \vector{v}'_j - \vector{s}_{ij}
        \end{aligned}
    $$ \\
    \setlength{\leftskip}{0em}
    \textbf{Online $\productProtocolOnline(\circ, \vector{u}, \vector{v})$:} \\
    \setlength{\leftskip}{1em}
    \mowner randomly picks scalars $k_i, i \in [m]$; \mowner sends to \downer all $k_i$ and \label{alg:preprocessing-technique:step:online-mowner-send}
    $$\tilde{\vector{u}} = \vector{u} - \sum_{i\in[m]} k_i \cdot \vector{u}'_i$$ \\
    \mowner and \downer produces shares of $\angleshare{\vector{u} \circ \vector{v}'_j}$ for all $j \in [m]$ as \label{alg:preprocessing-technique:step:share-fixed-u}
    $$
        \begin{aligned}
            \angleshare{\vector{u} \circ \vector{v}'_j}_0 & = \sum_{i \in [m]} k_i \cdot \angleshare{\vector{u}'_i \circ \vector{v}'_j}_0 \\
            \angleshare{\vector{u} \circ \vector{v}'_j}_1 & = \tilde{\vector{u}} \circ \vector{v}'_j + \sum_{i \in [m]} k_i \cdot \angleshare{\vector{u}'_i \circ \vector{v}'_j}_1
        \end{aligned}
    $$ \\
    \downer randomly picks scalars $\ell_j, j \in [m]$; \downer sends to \mowner all $\ell_j$ and \label{alg:preprocessing-technique:step:online-downer-send}
    $$\tilde{\vector{v}} = \vector{v} - \sum_{j\in[m]} \ell_j \cdot \vector{v}'_j$$ \\
    \mowner and \downer produces shares of $\angleshare{\vector{u} \circ \vector{v}}$ as \label{alg:preprocessing-technique:step:finish} 
    $$
        \begin{aligned}
            \angleshare{\vector{u} \circ \vector{v}}_0 & = \vector{u} \circ \tilde{\vector{v}} + \sum_{j \in [m]} \ell_j \cdot \angleshare{\vector{u} \circ \vector{v}'_j}_0 \\
            \angleshare{\vector{u} \circ \vector{v}}_1 & = \sum_{j \in [m]} \ell_j \cdot \angleshare{\vector{u} \circ \vector{v}'_j}_1
        \end{aligned}
    $$
\end{algorithm}

For simplicity, we denote Algorithm~\ref{alg:preprocessing-technique} as a general protocol $$\productProtocol(\circ, \vector{u}, \vector{v}) = (\productProtocolPrep(\circ), \productProtocolOnline(\circ, \vector{u}, \vector{v})),$$ consisting of a preprocessing protocol $\productProtocolPrep(\circ)$ independent of $\vector{u}, \vector{v}$ and a online protocol $\productProtocolOnline(\circ, \vector{u}, \vector{v})$ to specifically evaluate $\vector{u} \circ \vector{v}$.

\parab{Applying $\productProtocol(\circ, \vector{u}, \vector{v})$ in Linear Layer Training}
We now apply $\productProtocol(\circ, \vector{u}, \vector{v})$ to accelerate the online training of linear layers in Algorithm~\ref{alg:homomorphic-linear-evaluation} and Algorithm~\ref{alg:weight-gradient-calculation}. 
In particular, for every invocation with the same $\circ$ of $\productProtocol(\circ, \vector{u}, \vector{v})$, the preprocessing phase $\productProtocolPrep(\circ)$ is executed once and for all, before the training starts. When $\vector{u}, \vector{v}$ come on the fly during training, the two parties only execute $\productProtocolOnline(\circ, \vector{u}, \vector{v})$. 
We present the training protocol optimized by $\productProtocol(\circ, \vector{u}, \vector{v})$ in Algorithm~\ref{alg:training-iteration-with-preprocessing}. For consistency of symbols, we semantically denote $\vector{v} \odot_\reverse \vector{u} = \vector{u} \odot \vector{v}$. We provide the security analysis of this preprocessing optimization technique in \S~\ref{sec:privacy-analysis}.

\begin{algorithm}[t]
    \caption{Optimized Training Protocol of Linear Layers} 
    \label{alg:training-iteration-with-preprocessing}
    \setlength{\belowdisplayskip}{0.3ex} \setlength{\belowdisplayshortskip}{0.3ex}
    \setlength{\abovedisplayskip}{0.3ex} \setlength{\abovedisplayshortskip}{0.3ex}
    \textbf{Preprocessing:} \\
    \setlength{\leftskip}{1em}
    The two parties invoke $\productProtocolPrep(\circ)$, $\productProtocolPrep(\odot_x)$, $\productProtocolPrep(\odot)$, $\productProtocolPrep(\odot_\reverse)$. \\
    \setlength{\leftskip}{0em}
    \textbf{Online:} \\

    \emph{Foward propagation}: \mowner, \downer provide shares of $\share{\vector{X}}$. \\
    
    \setlength{\leftskip}{1em}
    The two parties invoke $\productProtocolOnline(\circ, \vector{W}, \share{\vector{X}}_1)$ to produce the shares of $\angleshare{\vector{W}\circ \angleshare{\vector{X}}_1}$. \mowner and \downer respectively output
    $$
        \begin{aligned}
            \angleshare{\vector{Y}}_0 & = \angleshare{\vector{W}\circ \angleshare{\vector{X}}_1}_0 + \vector{W} \circ \angleshare{\vector{X}}_0 + \vector{b} \\
            \angleshare{\vector{Y}}_1 & = \angleshare{\vector{W}\circ \angleshare{\vector{X}}_1}_1
        \end{aligned}
    $$ \\

    \setlength{\leftskip}{0em}
    \emph{Backpropagation}: \mowner, \downer provide shares of $\share{\grad{\vector{Y}}}$. \\
    
    \setlength{\leftskip}{1em}
    (For $\grad{\vector{X}}$) The two parties invoke $\productProtocolOnline(\odot_x, \vector{W}, \angleshare{\grad{\vector{Y}}}_1)$ to produce shares of $\vector{W}\odot_x \angleshare{\grad{\vector{Y}}}_1$. \mowner and \downer respectively output
    $$
        \begin{aligned}
            \angleshare{\grad{\vector{X}}}_0 & = \angleshare{\vector{W}\odot_x \angleshare{\grad{\vector{Y}}}_1}_0 + \vector{W} \odot_x \angleshare{\grad{\vector{Y}}}_0 \\
            \angleshare{\grad{\vector{X}}}_1 & = \angleshare{\vector{W}\odot_x \angleshare{\grad{\vector{Y}}}_1}_1
        \end{aligned}
    $$ \\

    (For $\grad{\vector{b}}$) The two parties locally compute $\angleshare{\grad{\vector{b}}}_0$ and $\angleshare{\grad{\vector{b}}}_1$. \downer add perturbation to its share, and $\grad{\vector{b}}$ is revealed to \mowner. \\

    (For $\grad{\vector{W}}$) The two parties invoke $\productProtocolOnline(\odot, \angleshare{\grad{\vector{Y}}}_0, \angleshare{\vector{X}}_1)$, $\productProtocolOnline(\odot_\reverse, \angleshare{\vector{X}}_0, \angleshare{\grad{\vector{Y}}}_1)$ to produce shares of the cross terms $\angleshare{\grad{\vector{W}}^{\mathsf{01}}}$ and $\angleshare{\grad{\vector{W}}^{\mathsf{10}}}$:
    $$
        \begin{aligned}
            \angleshare{\grad{\vector{W}}^{\mathsf{01}}} & = \angleshare{\angleshare{\grad{\vector{Y}}}_0 \odot \angleshare{\vector{X}}_1} \\ 
            \angleshare{\grad{\vector{W}}^{\mathsf{10}}} & = \angleshare{\angleshare{\grad{\vector{Y}}}_1 \odot \angleshare{\vector{X}}_0} = \angleshare{\angleshare{\vector{X}_0} \odot_\reverse \angleshare{\grad{\vector{Y}}}_1} \\ 
        \end{aligned}
    $$ \\
    \downer calculates
    $$\widehat{\grad{\vector{W}}} = \angleshare{\grad{\vector{W}}^{\mathsf{01}}}_1 + \angleshare{\grad{\vector{W}}^{\mathsf{10}}}_1 + \angleshare{\grad{\vector{Y}}}_1 \odot \angleshare{\vector{X}}_1$$ \\
    \downer adds a perturbation $\vector{e}$ to $\widehat{\grad{\vector{W}}}$ and sends it to \mowner; \\
    \mowner finishes with
    $$\grad{\vector{W}} = \widehat{\grad{\vector{W}}} + \angleshare{\grad{\vector{W}}^{\mathsf{01}}}_0 + \angleshare{\grad{\vector{W}}^{\mathsf{10}}}_0 + \angleshare{\grad{\vector{Y}}}_0 \odot \angleshare{\vector{X}}_0$$
\end{algorithm}

\subsection{Extending to Multiple \downers}

In practice, the dataset provided by each \downer could be highly heterogeneous or even biased. Thus, it is desirable to simultaneously train a model using the combined data contributed by different \downers. 

As stated in \S~\ref{subsec:intro:related-work}, extensibility is challenging in prior art. For instance, in previous HE-based training protocols (\cite{Nandakumar2019towards, Tian2022sphinx}), the \mowner can only train the model with one \downer, because the model is encrypted by the \downer. 
In MPC-based approaches (\cite{Mohassel2017secureml, Mohassel2018aby3, Chaudhari2020trident, Koti2021Tetrad}), if \mowner and \downers upload their model and data to several fixed computing servers, it would introduce the undesirable ``non-colluding'' assumption. On the other hand, if the \mowner and all \downers themselves participate as computing nodes, the MPC scheme would have to defend privacy against up to $n-1$ colluding parties, which would result in low efficiency.

Our framework avoids the high overhead of directly supporting an $n$-party computation by decomposing the procedure into the 2-PC paradigm, as we only need interaction between the \mowner and one \downer at a time. The weights are kept by the \mowner in plaintext so it could simply conduct collaborative training with each \downer in turn and updates its model incrementally. This design makes \sys extend to more \downers without extra computation or communication overhead, unlike previous general $n$-party MPC methods. In \S~\ref{subsec:evaluation:multiple-does}, we evaluate the extensibility of \sys.

\section{Security Analysis}\label{sec:privacy-analysis}

% We first formulate the security for our basic protocols without the preprocessing optimization, which could be proved in the standard real/ideal world paradigm by constructing a simulator to provide an indistinguishable view for the semi-honest adversary. The simulator does not contain private information about the other party, so privacy is guaranteed. 

\subsection{Security of the \sys Training Framework}
In this section, we first perform the security analysis for the \sys training framework without the preprocessing optimization described in \S~\ref{subsec:preprocessing-optimization}.  

\begin{definition}
    \label{def:private-training-protocol}
    A protocol $\Pi$ between a \downer possessing a train dataset $\mathcal{D} = \{(\vector{x}_i, t_i)\}$ and a \mowner possessing the model weights $\vector{M}$ is a \textbf{cryptographic training protocol} if it satisfies the following guarantees.
    \begin{itemize}[leftmargin=*, parsep=0.2ex, topsep=0.2ex, partopsep=0.2ex]
        \item \textbf{Correctness.} On every set of model weights $\vector{M}$ of the \mowner and every dataset $\mathcal{D}$ of the \downer, the output of the \mowner is a series of weight updates and finally a correctly trained model with updated weights.
        \item \textbf{Security.} 
        \begin{itemize}
            \item[-] (\textbf{Data privacy}) We require that a corrupted, semi-honest \mowner does not learn anything useful about the \downer's training data, \emph{except the weight updates and the final model}. Formally, we require the existence of an efficient simulator $\simulator_{MO}$ such that $\mathsf{View}_{MO}^{\Pi} \approx_c \simulator_{MO}(\vector{M}, \mathsf{out})$, where $\mathsf{View}_{MO}^{\Pi}$ denotes the view of the \mowner in the execution of $\Pi$, $\mathsf{out}$ denotes the output of the training protocol to \mowner, and $\approx_c$ denotes computational indistinguishability between two distributions.
            \item[-] (\textbf{Model privacy}) We require that a corrupted, semi-honest \downer does not learn anything useful about the \mowner's model weights, \emph{except the model's outputs (predictions) on \downer's dataset}. The model architecture is public to all parties. Formally, we require the existence of an efficient simulator $\simulator_{DO}$ such that $\mathsf{View}_{DO}^{\Pi} \approx_c \simulator_{DO}(\mathcal{D})$, where $\mathsf{View}_{DO}^{\Pi}$ denotes the view of the client in the execution of $\Pi$.
        \end{itemize} 
    \end{itemize}
\end{definition}

\begin{theorem}
    \label{theorem:basic-protocols}
    Assuming the existence of oblivious transfer, homomorphic encryption and secure protocols for non-linearity evaluations, the \sys framework without the preprocessing optimization is a cryptographic training protocol as defined in Definition~\ref{def:private-training-protocol}.
\end{theorem}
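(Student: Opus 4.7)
The plan is to establish the three properties in Definition~\ref{def:private-training-protocol} separately: correctness by induction on the network's layers, and the two privacy guarantees through a simulation-based hybrid argument that reduces to the stated cryptographic assumptions.

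For correctness, I would argue inductively that the invariant $\share{\vector{X}_{i-1}}_0 + \share{\vector{X}_{i-1}}_1 = \vector{X}_{i-1}$ is preserved across every layer. Algorithm~\ref{alg:homomorphic-linear-evaluation} preserves it because, by the linearity of $\circ$, the two output shares sum to $\vector{W}\circ\vector{X} + \vector{b}$; the assumed correctness of the non-linear subprotocols (\relu, average pooling, truncation) gives the same invariant there. Backpropagation is symmetric, and in Algorithm~\ref{alg:weight-gradient-calculation} \mowner correctly recovers plaintext $\grad{\vector{W}}$ by expanding $\grad{\vector{Y}}\odot \vector{X}$ into its four local and cross terms and adding back the mask $\vector{s}$.

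For data privacy, I would construct $\simulator_{MO}$ that takes $\vector{M}$ together with the protocol output (the sequence of weight updates and the final model) and produces a view indistinguishable from \mowner's real view. \mowner's view comprises \first incoming ciphertexts $\enc{\share{\vector{X}}_1}, \enc{\share{\grad{\vector{Y}}}_1}$ in the linear subprotocols; \second the decrypted value $\widetilde{\grad{\vector{W}}}$ returned in Algorithm~\ref{alg:weight-gradient-calculation}; and \third transcripts of non-linear subprotocols. I would walk through a sequence of hybrids: first replace every incoming HE ciphertext with an encryption of $\vector{0}$, indistinguishable by the semantic security of BFV applied over the polynomially many ciphertexts exchanged during the run; next, resample each $\widetilde{\grad{\vector{W}}}$ consistently with the known output $\grad{\vector{W}}$, using \mowner's own local shares and mask $\vector{s}$ (which are available to the simulator); and finally invoke the simulators assumed to exist for the non-linear subprotocols. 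Composing these hybrids yields a $\simulator_{MO}$ whose output is computationally indistinguishable from the real view, revealing nothing beyond the permitted weight updates.

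Model privacy is proved by a symmetric construction of $\simulator_{DO}$ taking $\mathcal{D}$ (with the model's predictions on $\mathcal{D}$, which \downer is entitled to learn) and simulating \downer's view. \downer's view consists of ciphertexts $\enc{\share{\vector{Y}}_1}$, $\enc{\grad{\vector{W}}^{\mathsf{cross}}-\vector{s}}$, the final output share $\share{\vector{Y}}_0$ used to reconstruct the predictions, and non-linear subprotocol transcripts. Each ciphertext sent by \mowner carries a value masked by a fresh uniform $\vector{s}$, so replacing it with $\enc{\vector{0}}$ is indistinguishable by semantic security; the final revealed share is determined by the predictions and \downer's own share, and the non-linear transcripts are replaced by their assumed simulators. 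The main obstacle I anticipate is the bookkeeping inside the hybrid argument: because training unrolls across $T$ iterations and $\ell$ layers, each issuing several ciphertexts and non-linear invocations, I must ensure that the reduction loss accumulates at most polynomially, and that the freshness of masks and encryption randomness across all iterations rules out any correlation attack that aggregates information across steps.
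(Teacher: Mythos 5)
Your overall strategy --- correctness by layer-wise induction plus two simulation-based hybrid arguments, one per corrupted party --- is exactly the paper's proof, and your data-privacy direction matches it step for step: replace \downer's outgoing ciphertexts $\enc{\share{\vector{X}}_1},\enc{\share{\grad{\vector{Y}}}_1}$ by $\enc{\vector{0}}$ (semantic security is the right tool here, since \mowner does not hold $\sk$), reprogram $\widetilde{\grad{\vector{W}}}$ from the permitted output $\grad{\vector{W}}$ together with \mowner's local shares and mask, and appeal to the assumed simulators for the non-linear subprotocols.

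The model-privacy direction, however, contains a genuine error. The ciphertexts that \mowner returns to \downer --- $\enc{\share{\vector{Y}}_1}=\enc{\vector{W}\circ\vector{X}-\vector{s}}$ in Algorithm~\ref{alg:homomorphic-linear-evaluation} and $\enc{\grad{\vector{W}}^{\mathsf{cross}}-\vector{s}}$ in Algorithm~\ref{alg:weight-gradient-calculation} --- are \emph{decrypted by \downer}, who holds the secret key and uses the resulting plaintext as its output share. Semantic security therefore gives you nothing against this adversary, and replacing these ciphertexts by $\enc{\vector{0}}$ yields a simulated view in which \downer's share is identically zero, which is trivially distinguishable from the real world, where that share is uniformly distributed (and feeds into all subsequent layers). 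The correct argument, and the one the paper uses, is information-theoretic on the plaintext rather than computational on the ciphertext: because \mowner one-time-pads the payload with a fresh uniform mask $\vector{s}$, the decrypted value is uniform, so the simulator sends $\enc{-\vector{s}}$ for a freshly sampled $\vector{s}$ (an encryption of a uniform tensor), not $\enc{\vector{0}}$. With that substitution your hybrid chain goes through, and the rest of your sketch --- the revealed output share being determined by the predictions \downer is entitled to, the non-linear transcripts being replaced by their assumed simulators, and the polynomial accumulation of the reduction loss over $T$ iterations and $\ell$ layers --- lines up with the paper's $\hybrid_1$ through $\hybrid_3$ for a corrupted \downer.
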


We rigorously prove Theorem~\ref{theorem:basic-protocols} using the real/ideal word paradigm by constructing  simulators for the \downer and the \mowner. Due to space constraint, the detailed simulator construction and hybrid proof are deferred to Appendix~\ref{appendix:security-proofs}.
Extending the theorem to the single-\mowner-multi-\downers scenario is trivial. 

\subsection{Distinguishability Caused by Prepossessing Optimization}

After introducing the preprocessing technique (see \S~\ref{subsec:preprocessing-optimization}) in \sys, the computational indistinguishability of these views no longer holds. This is because we use linear combinations of $m$ masks for the multiplication operands $\vector{u}$ and $\vector{v}$ of any linear operator $\circ$, instead of using uniformly random tensor masks. 
In this section, we analyze and quantify such distinguishability and prove that the privacy loss caused by the distinguishability is negligible (\ie it is computationally difficult for an adversary to derive private information based on the distinguishability).  

We first give a useful proof gadget. 
\begin{definition}
    A set $\vector{V} = \{\vector{v}_i\}_{i\in[n]}$ of $n$ elements is \textbf{$m$-linear combinatorially private} to a party \textsf{S}, if any property about the elements in $\vector{V}$, derived by \textsf{S}, has the form of a linear combination 
    % if any relation $F$ such that \textsf{S} could learn of set $\vector{V}$ is related to at least $m+1$ elements in $\vector{V}$, formally expressed as
    % hides the property of any individual element in at least ...
    % \begin{equation}
    %     F(\vector{v}_{n_0}, \vector{v}_{n_1}, \cdots, \vector{v}_{n_{m'-1}}) = 0
    % \end{equation}
    % where $n_i (i \in [m'])$ are $m'$ distinct indices in $[n]$, $m'>m$, and $F$ is a public function \emph{completely} determined by the $m'$ variables (\ie not determined by any $m'-1$ variables). Particularly, if $F$ has the form of linear combination
    \begin{equation}
        \label{eq:v-linear-combination}
        \sum_{i \in [m']} a_i \vector{v}_{n_i} = \hat{\vector{v}}
    \end{equation}
    where $n_i$ are $m'$ distinct indices in $[n]$ and $m'>m$. $a_i \neq 0$, $m$, and  $\hat{\vector{v}}$ are public parameters that are \emph{not controlled} by \textsf{S}.
\end{definition}

\iffalse
    \begin{example}
        If the mean value of $m+1$ values $\{v_i\}_{i\in[m+1]}$ is revealed to \textsf{S}, these values are $m$-linear combinatorially private to party \textsf{S}, since \textsf{S} learns and only learns
        $$\bar{v} = \frac{1}{m+1}(v_0 + v_1 + \cdots + v_m)$$
    \end{example}
\fi

Now we present the following theorem.  
% our preprocessing optimization protects the privacy of each individual input among a combination of at least $m+1$ inputs.

\begin{theorem}
    \label{theorem:processing-protocol-linear-combination}
    If $\productProtocolOnline(\circ, \vector{u}, \vector{v})$ (Algorithm~\ref{alg:preprocessing-technique}) is executed $n$ times with $n$ different $\vector{u}_i$ provided by \mowner and $n$ different $\vector{v}_i$ provided by \downer, and $m < n <\!\!< t$ ($t$ is the number of elements in the ring $\mathbb{Z}_t$), then $\vector{U}=\{\vector{u}_i\}_{i\in[n]}$ are $m$-linear combinatorially private to \downer, and $\vector{V} =\{\vector{v}_i\}_{i\in[n]}$ are $m$-linear combinatorially private to \mowner.
\end{theorem}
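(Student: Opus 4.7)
The plan is to reduce the claim to a random-matrix statement about the $n\times m$ coefficient matrix $K$ formed from the scalars $k_{q,i}$ that \mowner broadcasts across the $n$ online invocations. First I would isolate \downer's effective view: the preprocessing $\productProtocolPrep(\circ)$ runs once and is independent of $\vector{U}=\{\vector{u}_q\}_{q\in[n]}$, and because the masks $\vector{s}_{ij}$ are uniform, the decrypted shares $\vector{u}'_i\circ\vector{v}'_j-\vector{s}_{ij}$ that \downer receives are statistically uniform; all ciphertext traffic is hiding by the semantic security of the BFV scheme, which I may import from Theorem~\ref{theorem:basic-protocols}. Thus across the $n$ executions of $\productProtocolOnline$, the only data in \downer's view that depend on $\vector{U}$ are $\{(\tilde{\vector{u}}_q,k_{q,1},\dots,k_{q,m})\}_{q\in[n]}$, where $\tilde{\vector{u}}_q=\vector{u}_q-\sum_{i\in[m]}k_{q,i}\vector{u}'_i$ and the mask tensors $\vector{u}'_i$ remain uniform and hidden from \downer.

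Next I would characterise the linear combinations \downer can actually evaluate. Any candidate property $\sum_{q\in[n]}a_q\vector{u}_q=\hat{\vector{v}}$ rewrites, via the masking formula, as $\hat{\vector{v}}=\sum_{q\in[n]} a_q\tilde{\vector{u}}_q+\sum_{i\in[m]}\bigl(\sum_{q\in[n]} a_q k_{q,i}\bigr)\vector{u}'_i$. For $\hat{\vector{v}}$ to be a deterministic function of \downer's view, the coefficient of each unknown $\vector{u}'_i$ must vanish, i.e., $\sum_{q\in[n]} a_q k_{q,i}=0$ for every $i\in[m]$. Writing $K\in\mathbb{Z}_t^{n\times m}$ with entries $K_{q,i}=k_{q,i}$, the derivable coefficient vectors are exactly the left null space $\{a\in\mathbb{Z}_t^n:a^\top K=0\}$.

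The probabilistic core of the proof is to show that with overwhelming probability over \mowner's uniform choice of $K$, every non-zero $a$ in this null space has support strictly larger than $m$. Equivalently, every $m$-subset of rows of $K$ must be $\mathbb{Z}_t$-linearly independent. For a fixed $S\subset[n]$ with $|S|=m$, $K|_S$ is an $m\times m$ matrix of i.i.d.\ uniform entries in $\mathbb{Z}_t$, and I would inductively bound the probability that a freshly drawn row lies in the $\mathbb{Z}_t$-span of the preceding rows. Since $t=2^\ell$ is not a field, this step is delicate: I would pass to the reduction mod~$2$ to control rank growth in $\mathbb{F}_2$ and then lift back to $\mathbb{Z}_{2^\ell}$, obtaining a per-subset failure probability of $O(c^m/t)$ for an absolute constant $c$. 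A union bound over the $\binom{n}{m}\le n^m$ candidate subsets then gives an overall failure probability of $O((cn)^m/t)$, which is negligible in the regime $n\ll t$ whenever $m$ stays modest.

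Combining the three steps, every property about $\vector{U}$ derivable by \downer corresponds to some $a\in\ker(K^\top)$ which, except with negligible probability, has $|\mathrm{supp}(a)|>m$, exactly matching the definition of $m$-linear combinatorial privacy. The symmetric statement for $\vector{V}$ against \mowner follows by interchanging the roles of $(\vector{u},\vector{u}'_i,k_{q,i})$ and $(\vector{v},\vector{v}'_j,\ell_{q,j})$, since Algorithm~\ref{alg:preprocessing-technique} treats the two parties symmetrically with respect to the masking structure. The main technical obstacle I anticipate is the random-matrix analysis over the non-field ring $\mathbb{Z}_{2^\ell}$: ``linear independence of rows'' there means the determinant is a unit (odd), so field-based tools such as Schwartz--Zippel cannot be applied as a black box and must be adapted to track which coordinates survive reduction modulo successive powers of $2$ at each inductive step.
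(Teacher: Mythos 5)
Your proposal follows essentially the same route as the paper's proof: isolate the masked messages $\tilde{\vector{v}}_i=\vector{v}_i-\sum_j\ell_{ij}\vector{v}'_j$ (resp.\ $\tilde{\vector{u}}_q$), observe that any derivable relation among the $\vector{v}_i$ corresponds to a coefficient vector in the left null space of the $n\times m$ scalar matrix, and conclude that eliminating the $m$ unknown masks forces at least $m+1$ terms. You are in fact more careful than the paper in two respects: you explicitly justify that the rest of the view (preprocessing shares, ciphertexts) carries no information about $\vector{U},\vector{V}$, and you correctly identify that the condition needed is that \emph{every} $m$-subset of rows of the coefficient matrix be independent, whereas the paper only asserts that the full $n\times m$ matrix $\vector{L}$ has rank $m$, which is necessary but not sufficient for the ``at least $m+1$ equations'' conclusion.

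The one step that does not go through as you describe is the probabilistic estimate. Over $\mathbb{Z}_{2^{\ell}}$ an $m\times m$ matrix is invertible iff its determinant is a unit, i.e.\ iff its reduction mod $2$ is invertible over $\mathbb{F}_2$; for i.i.d.\ uniform entries this happens with probability $\prod_{i=1}^{m}(1-2^{-i})\approx 0.29$, so a fixed $m$-subset fails with \emph{constant} probability, not $O(c^{m}/t)$, and the union bound over $\binom{n}{m}$ subsets cannot close. (Moreover, whenever a minor's determinant is $2^{j}u$ with $j\ge 1$, a nonzero kernel vector of the form $2^{\ell-j}\cdot(\text{adjugate row})$ exists, yielding a short relation with nonzero coefficients that formally violates the definition.) You flagged this as the anticipated obstacle, and rightly so -- it is genuine and cannot be repaired by lifting from $\mathbb{F}_2$ alone. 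The paper sidesteps rather than solves it: it cites rank results for random matrices and, in a footnote, lets the masking party \emph{derandomize} its scalars (e.g.\ a Vandermonde-style choice) so that genericity is enforced by construction rather than argued probabilistically. If you adopt that move -- or restrict the scalars to a set where all relevant minors are units -- your argument closes and is otherwise a tighter version of the paper's.
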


\begin{proof}
    Since $\vector{u}_i$ and $\vector{v}_i$ are symmetric for \downer and \mowner,  respectively, proving the theorem for $\vector{v}_i$ is sufficient. 
    We prove the theorem using the elimination of the masks: since each $\vector{v}_i$ is masked with $m$ different masks, eliminating every mask requires a new equation of $\tilde{\vector{v}}_i$ (see Equation~\ref{eq:masked-vi} below). Thus, the adversary could eventually obtain a linear combination of at least $m+1$ different $\vector{v}_i$'s in $\vector{V}$. The formal formulation is as follows.
    
    In the preprocessing phase $\productProtocolPrep(\circ)$, \downer spawns $m$ random $\vector{v}'_j$. In the online phase $\productProtocolOnline(\circ, \vector{u}_i, \vector{v}_i)$, \downer chooses $m$ scalars $\ell_{ij} \in \integers_t$ for each $\vector{v}_i$ and sends to \mowner the following 
    \begin{equation}
        \label{eq:masked-vi}
        \tilde{\vector{v}}_i = \vector{v}_i - \sum_{j=0}^{m-1} \ell_{ij} \vector{v}'_j, i\in [n].
    \end{equation}
    Let matrix $\vector{L} = (\ell_{ij}) \in \integers_t^{n \times m}$. Since $\ell_{ij}$ are chosen uniformly in $\integers_t$ and $n <\!\!< t$, $\vector{L}$ is full-rank ($\mathsf{rank}(\vector{L}) = m$) with high probability~\cite{Blomer1997TheRank, Cooper2000OnTheDistribution}.\footnote{Note \downer can intentionally set $\ell_{ij}$ to ensure $\vector{L}$ is full-rank. For example, \downer can use the Vandemonde matrix with $\ell_{ij} = j^i \pmod{t}$.} Therefore, in order to eliminate $\vector{v}'_j$ which are unknown to \mowner, \mowner needs at least $m+1$ equations in the form of Equation~(\ref{eq:masked-vi}) to obtain a linear combination of elements in $\vector{V}$. Finally, \mowner obtains a relation of $m'>m$ tensors in $\{\vector{v}_i\}$ of its choice:
    \begin{equation*}
        \sum_{k=0}^{m'-1} a_k\vector{v}_{n_k} = \hat{\vector{v}},
    \end{equation*} 
    where the linear combination coefficients $a_k$ are determined by $\vector{L}$, and $\hat{\vector{v}}$ could be calculated with the knowledge of $\tilde{\vector{v}}_{n_i}$. By definition, $\{\vector{v}_i\}$ are $m$-linear combinatorially private to \mowner.
\end{proof}

\iffalse
    \begin{corollary}
        \label{corollary:sys-prep-security}
        By Theorem~\ref{theorem:processing-protocol-linear-combination}, the dataset of \downer, and the model parameters for each training iteration of \mowner, are both $m$-linear combinatorially private against each other in \fixmeb{\sys}.
    \end{corollary}

    One may wonder about the feasibility to exploit this privacy loss introduced by the preprocessing optimization. In \S~\ref{subsec:evaluation:gradient-attacks}, we demonstrate that by setting $m=8$ and precision bit-length $f=25$ (used in fixed-point representation), \emph{deriving the elements in a set $\vector{V}$ that is $m$-linear combinatorially private is as difficult as deciphering elements encrypted by 7680-bit RSA keys}. 
\fi

\begin{corollary}
    \label{corollary:search-space}
    By setting the appropriate number of masks $m$ and fixed-point bit precision $f$ in \sys, it is computationally difficult for an adversary to derive the elements in a $m$-linear combinatorially private set $\vector{V}$, because the adversary needs to exhaustively explore a search space of size $O(2^{fm})$.
\end{corollary}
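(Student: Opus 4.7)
The plan is to promote the structural result of Theorem~\ref{theorem:processing-protocol-linear-combination} into a quantitative search-complexity bound. By that theorem, every property of $\vector{V}$ the adversary can derive takes the form of a linear equation
\begin{equation*}
\sum_{k=0}^{m'-1} a_k \vector{v}_{n_k} = \hat{\vector{v}}
\end{equation*}
with $m' \geq m+1$ distinct unknowns drawn from $\vector{V}$, where the coefficients $a_k$ and the right-hand side $\hat{\vector{v}}$ are not under the adversary's control. The goal is to show that exact recovery of any individual $\vector{v}_i$ from such relations forces an exhaustive search of size $O(2^{fm})$.

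First I would formalize the adversary's view as the collection of such underdetermined linear equations over $\integers_t$ accumulated across the $n$ executions of $\productProtocolOnline$. Next, I would observe that a single such equation has an affine solution set of dimension at least $m$: fixing any $m$ of the $m+1$ unknowns forces the last one uniquely, and every such assignment is syntactically consistent with the equation. Because each entry of $\vector{v}_i \in \integers_t$ carries only $f$ bits of meaningful fixed-point precision (by the encoding of \S~\ref{subsec:prelim:additive-secret-sharing}) and thus lies in a bounded domain of roughly $2^f$ plausible values, the adversary must enumerate on the order of $(2^f)^m = 2^{fm}$ candidate assignments per target entry, with no side information to discard any of them before verification. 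This yields the claimed $O(2^{fm})$ lower bound on the effective search space.

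The step I expect to be the main obstacle is ruling out algebraic shortcuts that combine equations from distinct queries to effectively reduce the number of unknowns. I would address this by re-invoking Theorem~\ref{theorem:processing-protocol-linear-combination}: because the masking scalars $\ell_{ij}$ (and symmetrically $k_i$) are sampled uniformly in $\integers_t$, the coefficient matrix $\vector{L}$ behaves as a uniformly random matrix over $\integers_t$ and is full-rank with overwhelming probability; any linear combination of the accumulated equations still involves at least $m+1$ entries of $\vector{V}$, so the dimension of the solution subspace never drops below $m$. Stitching equations together therefore offers no asymptotic gain, and the $2^{fm}$ bound is preserved. Finally, I would close by noting that choosing $m$ and $f$ so that $2^{fm}$ exceeds standard cryptographic thresholds (e.g., $128$ bits) renders recovery computationally infeasible, which is exactly the conclusion of the corollary.
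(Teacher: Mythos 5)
Your proposal is correct and follows essentially the same route as the paper's proof sketch: fix the $m$ (or $m'-1\geq m$) free unknowns in the linear relation, note that each admissible value is confined by the fixed-point encoding to a bounded range of about $2^f$ candidates, and conclude that the forced enumeration has size $O(2^{fm})$. Your additional remark about combining equations across queries is already subsumed by Theorem~\ref{theorem:processing-protocol-linear-combination} (any derivable relation necessarily involves more than $m$ elements of $\vector{V}$), so it adds robustness but not a new idea.
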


\emph{Proof sketch.} Suppose an adversary tries to obtain one specific element $\vector{v}_i$ in $\vector{V}$ from the linear combination $\sum_{k\in[m']} a_k\vector{v}_k = \hat{\vector{v}}$ in Eq. (\ref{eq:v-linear-combination}). For simplicity, we let $n_k = k$ and $\vector{v}_i = v_i$ has only one dimension.  The adversary cannot use iterative methods such as gradient descent to approximate the solution, since the ring $\integers_t$ is discrete. Instead, supposing that the decimal value $\tilde{v}_i = \mathsf{Encode}^{-1}(v_i)$ \footnote{Recall $\mathsf{Encode}(\tilde{v})$ is the fixed-point encoding function from $\mathbb{R}$ to $\mathbb{Z}_t$ (\S~\ref{subsec:prelim:additive-secret-sharing}).} has a range $[-b/2, b/2]$, $b=O(1)$, the adversary needs to search $m'-1\geq m$ variables and check if the last variable is decoded into the range. In particular, the adversary solves the following problem: 
$$\begin{aligned}
    \{v_i \in \integers_t \}_{i\in[m'-1]}
    ~ & \text{s.t.} ~ v_{m'-1} = a_{m'-1}^{-1}(\hat{v} - \sum_{i=0}^{m'-2}a_kv_k) \\
    ~ & \text{and} ~ \mathsf{Encode}^{-1}(v_i) \in [-b/2,b/2], \forall i\in[m']
\end{aligned}$$
Each variable takes $b\cdot 2^f$ possible values in $\integers_t$, resulting in a total search space of  $O(2^{fm})$. 

In \S~\ref{subsec:evaluation:gradient-attacks}, we experimentally show that by setting $m=8$ and $f=25$, \emph{deriving the elements in a set $\vector{V}$ that is $m$-linear combinatorially private is as difficult as deciphering elements encrypted by 7680-bit RSA keys}. 

\subsection{Privacy Analysis of the Weight Updates}
% In Definition~\ref{def:private-training-protocol}, the weight updates are revealed  \mowner. 
% Recent works \cite{Zhu2019DLG, Zhao2020idlg, Geiping2020inverting} argue that these updates may leak private information about the training data. To address this concern, as stated in our design of Step~\ref{alg:step:weight-dp} 
In Step~\ref{alg:step:weight-dp} of Algorithm~\ref{alg:weight-gradient-calculation}, we propose a secret-shared version of DPSGD~\cite{Abadi2016dpsgd} that enables the \downer to add a perturbation $\vector{e} \leftarrow \mathcal{N}\left(0, \frac{\sigma^2 C^2}{B} \mathbb{I}\right)$ to the weight updates. Under this setting,  
% Before training, the \downer could estimate the upper bound $C$ of the L2 norm of the gradients with respect to a single sample. At a high level, this is an altered version of the DPSGD algorithm \cite{Abadi2016dpsgd}. \cite{Abadi2016dpsgd} has proved that under this setting, 
there exist constants $c_1, c_2$, such that given batch size $B$, training dataset size $N$ and number of training steps $T$, for any $\epsilon<c_1B^2T/N^2$, this mechanism is $(\epsilon, \delta)$-differentially private for any $\delta > 0$, if we choose
\begin{equation}
    \sigma \geq c_2\frac{B\sqrt{T\log(1/\delta)}}{N\epsilon}
\end{equation}

\section{Hardware Acceleration}\label{sec:hardware_acc}

In this section, we present our GPU acceleration of BFV cryptosystem by exploiting the parallelizable feature of lattice-based homomorphic cryptosystem. 
Our design is independent of the protocols in \sys and universally applicable. Typically, GPU provides advantage over CPU in the case where multiple  operations are parallelizable. Therefore, we try to recognize which operations in the BFV construction  are parallelizable, and to what extent they can be parallelized.

\parab{NTT} The BFV scheme operates on the polynomial ring. Although polynomial addition is trivial with each coefficient added individually, the multiplication of polynomials (modulo $X^N+1$) cannot be done in linear time trivially. Direct multiplication requires $O(N^2)$ time complexity. Number Theory Transform (NTT) addresses this issue using a bijection between the polynomial ring $\ring_{q,N}$ and the vector space $\mathbb{Z}_q^N$. With this bijection, the multiplication on the ring corresponds to vector element-wise multiplication on $\mathbb{Z}_q^N$. Therefore, when a polynomial is represented in the NTT form, both addition and multiplication of polynomials receive a degree of parallelism of $N$.

\parab{RNS Decomposition} The ciphertext coefficient modulus $q$ of BFV scheme could be very large (\eg $>2^{160}$). Thus, implementing the vector operations directly in $\mathbb{Z}_q$ requires the arithmetic of big integers, which is not natively supported by modern CPUs and GPUs. \cite{Bajard2017RNSBFV} proposes to improve efficiency by applying the Chinese Remainder Theorem: If $q = q_1q_2 \cdots q_L$ with each $q_i < 2^{64}$, by an isomorphism, the field $\mathbb{Z}_{q} \cong \mathbb{Z}_{q_1} \times \mathbb{Z}_{q_2} \times \cdots \times \mathbb{Z}_{q_L}$. Therefore, an element $a \in \mathbb{Z}_q$ could be represented by a vector $\vector{a} = (a\mod q_1, a\mod q_2, \cdots, a \mod q_L)$. The addition and multiplication are conveyed to each smaller field $\mathbb{Z}_{q_i}$, which can be effectively implemented with modern processors supporting $64$-bit integer arithmetic. This technique is called residual number system (RNS) decomposition.

\parab{Parallelization} With NTT transform and RNS decomposition, the arithmetic of BFV scheme is conveyed into $NL$-dimensional vector space $\prod_{i=1}^L \mathbb{Z}_{q_i}^N$. Addition and multiplication in message space (polynomial ring $\ring_{t,N}$) corresponds to vector addition and element-wise multiplication. Typically, the polynomial degree $N  \geq 4096$, and the number of decomposed $q_i$ is $L \geq 2$. Therefore, the homomorphic operations have a degree of parallelism of at least $8192$. We exploit this parallelism with an implementation on GPU. To fully support the BFV scheme, we also provide efficient GPU implementations of NTT and RNS composition/decomposition, besides vector addition and multiplication.

\parab{Memory management} We notice that allocating and freeing GPU memory could be time-consuming if they are executed with every construction and disposal of a ciphertext/plaintext. Therefore, we use a memory pool to keep track of each piece of allocated memory, only freeing them when the program exits or when the total memory is exhausted. The memory of discarded objects is returned to the pool. When a new ciphertext or plaintext is required, the memory pool tries to find a piece of memory already allocated with a suitable size. If not found, allocation is performed.

\section{Evaluation}

% Based on the requirement of private collaborative learning, we design our evaluation to demonstrate the functionalities of \sys:

Our evaluations are designed to demonstrate the following. 

\begin{itemize}[leftmargin=*, nosep]
    \item \textbf{End-to-End training performance}. We show that \sys is able to train models (both from scratch and via transfer learning) with test accuracies \emph{nearly identical to plaintext training}, and could be extended to multiple \downers.
    \item \textbf{Performance breakdown}. We evaluate the efficiency of the key training protocols in \sys, as well as the performance gains of accelerating individual HE operations using our hardware implementation.  
    \item \textbf{Efficiency comparison with prior art}. We show that \sys has non-trivial performance (\eg training time and communication bytes) advantages over closely related art (although none of them offer independent model deployability and native extensibility to include more \downers)
    \item \textbf{\sys against attacks}. Finally, we experimentally demonstrate that \sys is secure against both existing attacks and adaptive attacks specifically designed to exploit the protocols of \sys. 
\end{itemize}

\subsection{Implementation}

We implement the training protocols of \sys primarily in Python, where we formalize the common layers of neural networks as separate modules. We use C++ and the CUDA programming library to implement the GPU version of the BFV scheme, referring to the details specified in state-of-the-art CPU implementation, Microsoft SEAL library~\cite{SEAL}. For non-linear layers, we adapt the C++-based framework of OpenCheetah~\cite{Huang2022cheetah} and CrypTFlow2~\cite{Rathee2020cryptflow2}, which supports efficient evaluation of the ReLU function, division and truncation, etc. Pybind11~\cite{pybind11} is used to encapsulate the C++ interfaces for Python. The total developmental effort is $\sim$15,000 lines of code. Our open-sourced code artifact is described in \S~\ref{appendix:artifact}. 

% in the \hyperref[appendix:artifact]{Artifact Appendix}.

\subsection{Evaluation Setup}\label{subsec:evaluation:setup}

We evaluate our framework on a physical machine with Intel Xeon Gold 6230R CPU and NVIDIA RTX A6000 GPU (CUDA version 11.7), under the operating system Ubuntu 20.04.5 LTS. We evaluate two typical network settings: a LAN setting with 384MB/s bandwidth and a WAN setting with 44MB/s bandwidth, same as \cite{Huang2022cheetah}. 
BFV HE scheme is instantiated with $N=8192$, $t = 2^{59}$, and $q \approx 2^{180}$ decomposed into three RNS components. This HE instantiation provides $\lambda = 128$ bits of security. The fixed-point precision is $f=25$.

We use the following three datasets in our evaluations:
\begin{itemize}[leftmargin=*, nosep]
    \item MNIST dataset~\cite{Lecun1998Gradient} includes 70,000 grayscale images (60,000 for training, 10,000 for testing) of handwritten digits from 0 to 9, each with $28\times 28$ pixels.
    \item CIFAR10 dataset~\cite{Krizhevsky2009LearningML} includes 60,000 RGB images (50,000 for training, 10,000 for testing) of different objects from ten categories (airplane, automobile, bird, etc.), each with $32\times 32$ pixels.
    \item AGNews dataset \cite{Zhang2015Agnews} includes 120,000 training and 7,600 testing lines of news text for 4 different topic categories.
\end{itemize}

\subsection{End-to-End Training Performance}

We evaluate the performance of \sys in a series of different training scenarios.

\subsubsection{Training from Scratch} 
The first case studied is that a \mowner trains a model from scratch with one \downer's data. We experiment with 4 neural networks, denoted as MLP (multi-layer perception) for MNIST~\cite{Mohassel2018aby3, Patra2022blaze}, CNN for MNIST~\cite{Riazi2018Chameleon}, and TextCNN for AGNews~\cite{Zhang2015Agnews}, CNN for CIFAR10~\cite{Tian2022sphinx}. 
The detailed architectures of these NNs are listed in Appendix~\ref{appendix:nn-architecture}. We train each image NN for 10 epochs and the text NN for 5 epochs using the training dataset, and test the classification accuracy on the test dataset every 1/5 epoch. Batch size is 32 for MNIST and AGNews and 64 for CIFAR10. We use SGD optimizer with a momentum of 0.8, learning rate $\eta=10^{-2}$. 
For fair comparison with plaintext training, the \downer does not add DP noises to the weight gradients in these experiments (see \S~\ref{subsec:evaluation:dp-impact-on-accuracy} for results with DP noises). The results are shown in Table~\ref{table:training-accuracies} and the test accuracy curves are plotted in Figure~\ref{fig:acc-curve}. We observe that, the test accuracies of models trained in \sys experience very minor declines (on average $<0.7\%$) compared with plaintext training, which might be attributed to the limited precision in fixed-point arithmetic. Note that the model accuracies are relatively lower than SOTA models for the CIFAR10 dataset because \first the models used in this part are relatively small and \second we only train them for 10 epochs. 

\begin{table}
    \centering
    \begin{tabular}{ccc|rr}
        \hline
        Scenario & Task  & Model & \sys & Plaintext \\ \hline
        \multirow[c]{4}{*}{\tabincell{c}{Train from \\ scratch}}
        & MNIST           &  MLP & 97.74\% & 97.82\% \\
        & MNIST           &  CNN & 98.23\% & 98.59\% \\ 
        & AGNews          & TextCNN & 87.72\% & 87.97\% \\
        % & CIFAR10         &  NN3 & 59.30\% & 60.93\% \\ 
        & CIFAR10         &  CNN & 71.69\% & 72.27\% \\ 
        \hline 
        \multirow[c]{2}{*}{\tabincell{c}{Transfer \\ learning}}
        % & AGNews          & GPT2 & 86.94\% & 86.09\% \\
        & CIFAR10         & AlexNet & 86.72\% & 86.90\% \\ 
        & CIFAR10         & ResNet50 & 90.02\% & 89.87\% \\ 
        \hline
    \end{tabular}
    \caption{Highest test accuracy in 10 training epochs for different ML tasks, reported for training with \sys (private training) and plaintext training, respectively. \sys achieves nearly identical training accuracy as plaintext training.}
    \label{table:training-accuracies}
\end{table}

\begin{table*}
    \centering
    \small
    \begin{tabular}{ccc||rrr||rr|rrr}
        \hline

        \multirow[c]{3}{*}{Scenario} &
        \multirow[c]{3}{*}{Task} &
        \multirow[c]{3}{*}{Model} &
        \multicolumn{3}{c||}{\sys} &
        \multicolumn{5}{c}{\sysOpt} \\
        \cline{4-11}

        & & &
        \multicolumn{3}{c||}{Online} &
        \multicolumn{2}{c|}{Preprocessing} &
        \multicolumn{3}{c}{Online} \\

        & & &
        $\mathsf{TP}_{\mathsf{LAN}}$ & $\mathsf{TP}_{\mathsf{WAN}}$ & $\mathsf{C}$ &
        $\mathsf{T}_\mathsf{prep}$ & $\mathsf{C}_\mathsf{prep}$ & 
        $\mathsf{TP}_{\mathsf{LAN}}$ & $\mathsf{TP}_{\mathsf{WAN}}$ & $\mathsf{C}$
        \\ \hline

        \multirow[c]{4}{*}{\tabincell{c}{Train from \\ scratch}}
        %                     Basic                                         Prep           Online
        %                     TpLAN              TpWAN               C      T      C       TpLAN               TpWAN               C
        & MNIST   &     MLP & $9.73\times10^4$ & $5.12\times10^4$ &  1.66 & 0.02 &  3.35 & $26.52\times10^4$ & $19.87\times10^4$ &  0.23 \\
        & MNIST   &     CNN & $7.70\times10^4$ & $4.43\times10^4$ &  1.71 & 0.02 &  4.13 & $13.72\times10^4$ & $10.75\times10^4$ &  0.36 \\
        & AGNews  & TextCNN & $0.37\times10^4$ & $0.53\times10^4$ & 14.62 & 0.27 & 19.28 & $ 0.76\times10^4$ & $ 1.07\times10^4$ &  6.74 \\
        % & CIFAR10 & NN3 & $2.58\times10^4$ & $1.62\times10^4$ &  4.11 & 0.05 & 10.26 & $ 2.90\times10^4$ & $ 1.98\times10^4$ &  2.86 \\
        & CIFAR10 &     CNN & $0.18\times10^4$ & $0.12\times10^4$ & 44.89 & 0.70 & 83.12 & $ 0.22\times10^4$ & $ 0.15\times10^4$ & 34.90 \\
        \hline
        \multirow[c]{2}{*}{\tabincell{c}{Transfer \\ learning}}
        % & AGNews  &     GPT2 & $4.80\times10^4$ & $9.36\times10^4$ &  1.82 & 0.02 &  3.21 & $19.03\times10^4$ & $24.92\times10^4$ &  0.22 \\
        & CIFAR10 & AlexNet & $0.52\times10^4$ & $0.39\times10^4$ & 11.33 & 0.91 & 46.00 & $ 1.55\times10^4$ & $ 1.24\times10^4$ &  2.90 \\
        & CIFAR10 & ResNet50 & $1.83\times10^4$ & $1.17\times10^4$ &  5.48 & 0.30 & 15.96 & $ 8.05\times10^4$ & $ 5.89\times10^4$ &  0.82 \\
        
        \hline
    \end{tabular}
    \caption{Training costs for different ML tasks. For the online phase, $\mathsf{TP}$ stands for the throughput (samples/hour) of the training system, and subscript $\mathsf{LAN}, \mathsf{WAN}$ indicate the network settings; $\mathsf{C}$ stands for the online communication (MB) per sample. For \sysOpt, we also report the time ($\mathsf{T}_\mathsf{prep}$, hours) and communication ($\mathsf{C}_\mathsf{prep}$, GB) of preprocessing. Note that the preprocessing overhead is one-time overhead.}
    \label{table:training-costs}
\end{table*}

\begin{figure}
    \centering
    \subfigure[MNIST, MLP]{
        \includegraphics[width=.45\linewidth]{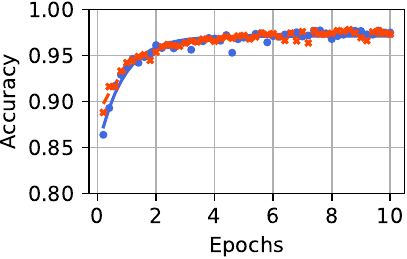}
    }
    \subfigure[MNIST, CNN]{
        \includegraphics[width=.45\linewidth]{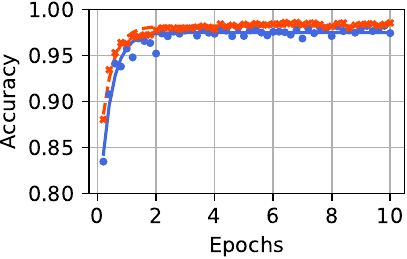}
    } \\
    \subfigure[AGNews, TextCNN]{
        \includegraphics[width=.45\linewidth]{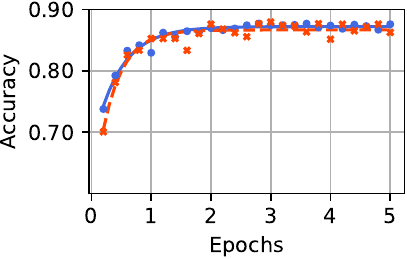}
    }
    % \subfigure[CIFAR10, NN3]{
    %     \includegraphics[width=.45\linewidth]{}
    % }
    \subfigure[CIFAR10, CNN]{
        \includegraphics[width=.45\linewidth]{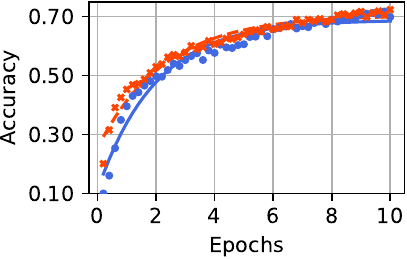}
    } \\
    % \subfigure[AGNews, GPT2]{
    %     \includegraphics[width=.45\linewidth]{}
    % }
    \subfigure[CIFAR10, AlexNet]{
        \includegraphics[width=.45\linewidth]{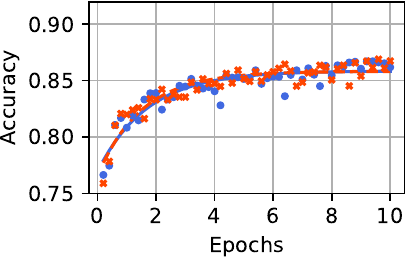}
    }
    \subfigure[CIFAR10, ResNet50]{
        \includegraphics[width=.45\linewidth]{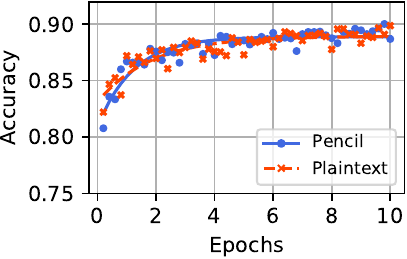}
    }
    \caption{Test accuracies for trained models. (a) $\sim$ (d) are for models trained from scratch; (e) and (f) are for models trained via transfer learning.
    % in plaintext training and private training in \sys for.
    }
    \label{fig:acc-curve}
\end{figure}

We also report the total time and communication needed to train these NNs for one epoch in Table~\ref{table:training-costs}. We measure the overheads for \sys both with and without preprocessing optimization, denoted as \sys and \sysOpt respectively. For the preprocessing phase, we set the number of masks to $m=8$. 
In the LAN network setting, \sys without preprocessing can train CNNs for MNIST and CIFAR10 from scratch (10 epochs) within 7.8 hours and 280 hours, respectively. When the preprocessing technique is enabled, the training times are further reduced to 4.4 hours and 229 hours, respectively.
In the WAN setting, we see a throughput reduction of $30\% \sim 50\%$, since the communication overhead is comparable to computation overhead. We observe that in training CNN for CIFAR10, the improvement of the preprocessing technique is not as significant as the previous 3 NNs. This is because a significant portion of training overhead of this NN is introduced by the non-linear evaluations (\eg ReLU), while the preprocessing technique focuses on linear layers. 

\subsubsection{Transfer Learning Models}
We now evaluate applying \sys in transfer learning: 
% As an extension to our framework, we conduct experiments in the scenario of transfer learning:
\mowner uses a publicly available pre-trained model as the feature extractor, and subsequently trains a classifier on top of it.
Since models pre-trained on large general datasets achieve robust feature extraction, transfer learning can significantly reduce the convergence time and improve test accuracy on specialized datasets, as shown by~\cite{Oquab2014Learning, Huh2016WhatMakes}.
Before training starts, both \downer and \mowner obtain the public feature extractor. During training, the \downer first passes its data through the feature extractor to get intermediate representations (IRs), and the classifier is trained privately on these IRs. 

In our experiments, we use two publicly available pre-trained models, AlexNet~\cite{Krizhevsky2017AlexNet} and ResNet50 \cite{He2016ResNet} as the feature extractor.
\footnote{The original model is split into two consecutive parts: the feature extractor and the original classifier on a general dataset. We replace the original classifier with several fully connected layers to be trained. The weights pre-trained on ImageNet \cite{Deng2009imagenet} are provided by the \texttt{torchvision} library.}
The complete models are denoted with the pre-trained model's name (see Appendix~\ref{appendix:nn-architecture} for the detailed architecture). The results are shown in Table~\ref{table:training-accuracies} and Figure~\ref{fig:acc-curve}. The best accuracy for CIFAR10 is improved to 90.02\%, compared with the train-from-scratch CNN (71.69\%). Meanwhile, the training time and communication overhead are also greatly reduced, as shown in Table~\ref{table:training-costs}. For instance, transfer learning of ResNet50 sees a $4.4\times$ boost in throughput compared to training-from-scratch CNN, and can be trained within 6.2 hours in the LAN setting.

\emph{Key takeaways}: for relatively large models like ResNet50 (with 25 million parameters), the overhead of cryptographic training from scratch is still significant. Augmented with transfer learning, cryptographic training can be more practical. 

\subsubsection{Training with Heterogeneous \downers}\label{subsec:evaluation:multiple-does}

In this segment, we evaluate the extensibility of \sys. We consider the case where the datasets owned by different \downers are biased, so it is desirable for the \mowner to incorporate the data from multiple \downers. 
In our experiment, we create five \downers, each with a dataset dominated by only two labels from the CIFAR10 dataset. For each training step, the \mowner trains the model with one of these \downers in turn. Finally \mowner tests the trained model with a test set with all ten labels. We evaluate with the two models used for CIFAR10. The results are shown in Figure~\ref{fig:multiple-does-acc}. Clearly, in the case where the datasets owned by different \downers are highly heterogeneous, extending the training with more \downers is necessary to achieve high test accuracies. 

\begin{figure}
    \centering
    \includegraphics[width=.85\linewidth]{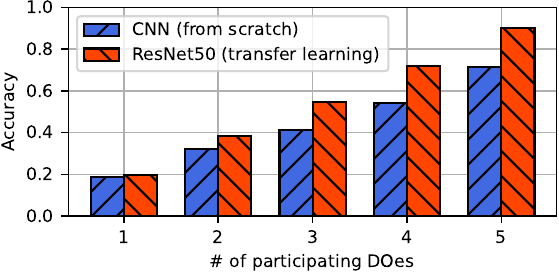}
    \caption{Test accuracies of the models trained with different numbers of heterogeneous \downers.}
    \label{fig:multiple-does-acc}
\end{figure}

\subsubsection{Impact of the DP Noises}\label{subsec:evaluation:dp-impact-on-accuracy}
In \S~\ref{subsec:design:linear:backward}, \sys designs a secret-shared version of DPSGD algorithm to allow the \downers to add DP noises to the weight updates. In this part, we evaluate the impact of DP noises on model performance. We use the two models of CIFAR10 in this experiment. We first estimate the gradient bound $C$ used in our algorithm to be $8$, then we conduct the experiments with different noise levels $\sigma$, with a fixed batch size $B=64$. The models are trained for 10 epochs. The results are shown in Table~\ref{table:dp-acc}. We observe that the accuracy slightly drops ($<2$\%) given $\sigma\leq 10^{-2}$. 
In \S~\ref{subsec:evaluation:gradient-attacks}, we empirically demonstrate that $\sigma=10^{-2}$ is sufficient to defend against the reconstruction attacks on the gradients.

% We evaluate how incorporating the DPSGD algorithm

\begin{table}
    \centering
    \begin{tabular}{rr|rr}
        \hline
        \multirow[c]{2}{*}{$\sigma$} & \multirow[c]{2}{*}{$(\epsilon,\delta)$-DP} & \multicolumn{2}{c}{Accuracy} \\
        & & \multicolumn{1}{c}{CNN} & \multicolumn{1}{c}{ResNet50} \\ \hline
        0     & None              & 71.69\% & 90.02\% \\
        0.005 & $( 114, 10^{-5})$ & 71.20\% & 89.71\% \\ 
        0.010 & $(  57, 10^{-5})$ & 70.32\% & 89.29\% \\
        0.020 & $(28.5, 10^{-5})$ & 68.87\% & 88.47\% \\
        0.050 & $(11.4, 10^{-5})$ & 57.01\% & 85.17\% \\
        % 0.100 & $( 5.7, 10^{-5})$ & 41.14\% & 81.80\% \\ 
        \hline
    \end{tabular}
    \caption{Test accuracy of CIFAR10-CNN and ResNet50 trained with DP-mechanism of different noise levels $\sigma$.}
    \label{table:dp-acc}
\end{table}

\subsection{Performance Breakdown}
We present the performance breakdown of \sys. 
% of the components used in our framework.

\parab{Linear Protocols}
We inspect the time and communication overhead required to train a single linear layer in \sys. The overhead of truncation is excluded. The results are shown in Table~\ref{table:performance-single-layer}. For training time, the preprocessing technique achieves $3\times\sim 5\times$ online speedup over the basic design. The communication cost is also reduced by $90\%$ for fully connected layers and $2/3$ for 2d-convolutional layers.

\begin{table}[htb]
    \centering \small
    \subfigure[Fully connected layers]{
        \begin{tabular}{rr|rr|rr}
            \hline
            & & \multicolumn{2}{c|}{\sys} & \multicolumn{2}{c}{\sysOpt} \\
            $n_i$ & $n_o$ & Time & Comm. & Time & Comm. \\ \hline 
            256  &  100 &   6.8ms &  0.40MB &  1.2ms &  18KB \\
            512  &   10 &   2.6ms &  0.17MB &  0.9ms &  14KB \\
            2048 & 1001 & 256.1ms &  5.51MB & 30.3ms & 822KB \\
            \hline
        \end{tabular}
    } \\ 
    \subfigure[2d-convolutional layers]{
        \begin{tabular}{rrrr|rr|rr}
            \hline
            & & & & \multicolumn{2}{c|}{\sys} & \multicolumn{2}{c}{\sysOpt} \\
            $c_i$ & $c_o$ & $h, w$ & $s$ & Time & Comm. & Time & Comm. \\ \hline 
            64 & 64 & $16^2$ & 5 &  165ms &  3.61MB &  13ms & 0.75MB \\
            3  & 64 & $56^2$ & 3 &  244ms & 11.61MB & 100ms & 4.71MB \\
            64 & 3  & $56^2$ & 3 &  237ms & 11.51MB &  86ms & 4.80MB \\
            \hline
        \end{tabular}
    }
    \caption{Online training overheads of linear layers. Batch size is 64 and the results are  averaged across samples. $n_i$ and $n_o$ denote the input and output neurons of a fully connected layer. $c_i, c_o$ denote input and output channels of a convolutional layer. $h,w$ is input image size and $s$ is the kernel size.}
    \label{table:performance-single-layer}
\end{table}

\parab{Acceleration of Hardware Implementation}
We further zoom into the performance of individual HE operators. In particular, we report the performance of our GPU-based BFV implementation and compare it with the state-of-the-art CPU implementation of Microsoft SEAL library \cite{SEAL}. For a complete comparison, we also implement the SIMD encoding/decoding functionality used in SEAL. 
%We use the same parameters as in the training evaluation, with $N=8192, q \approx 2^{180}$. 
The results are listed in Table~\ref{table:bfv-gpu-performance}. On average, our hardware implementation achieves $10\times$ acceleration over CPU implementation. We notice that there are other GPU acceleration works for HE (\eg \cite{Wang2014Accelerating, Turkoglu2022AnAccelerated}), but experimentally evaluating them is difficult since they are close-sourced.

\begin{table}[htb]
    \centering \small
    \begin{tabular}{r|rrr}
        \hline
        \multirow{2}{*}{HE Operation} & GPU & CPU & \multirow{2}{*}{Speedup} \\         
        & (Ours) & (\cite{SEAL}) \\  \hline
        Encoding       &  38 &   97 &  2.5$\times$ \\
        Decoding       &  65 &  114 &  1.8$\times$ \\
        Encryption     & 155 & 2611 & 16.8$\times$ \\
        Decryption     &  94 &  653 &  6.9$\times$ \\
        Addition       &   2 &   33 & 16.5$\times$ \\
        Multiplication & 506 & 8362 & 16.5$\times$ \\
        Relinearization&  38 &  173 &  4.6$\times$ \\
        Plain Mult.    & 178 & 1130 &  6.3$\times$ \\
        Rotation       & 366 & 1532 &  4.2$\times$ \\
        \hline
    \end{tabular}
    \caption{Time overhead ($\mu$s) of various HE operations.}
    \label{table:bfv-gpu-performance}
\end{table}

\subsection{Efficiency Comparison with Prior Art}\label{subsec:evaluation:comparison-prior-art}
In this section, we compare the training efficiency of \sys with prior art of MPC.
As discussed in \S~\ref{subsec:intro:related-work}, machine learning protocols using MPC rely on the non-colluding assumption, unless the \mowner and \downers themselves participate as computing servers, which, unfortunately, has suffer from the extensibility problems for the 2/3/4-PC frameworks or scalability problem for general $n$-PC frameworks.
Nevertheless, we compare the efficiency of \sys with two MPC frameworks, QUOTIENT~\cite{Agrawal2019quotient} and Semi2k~\cite{Cramer2018SPDZ2k}. The former is a non-extensible 2-PC framework, and the latter is extensible to any number of parties but we instantiate it with only 2 parties for maximum efficiency. The results are listed in Table~\ref{table:compare-quotient}. QUOTIENT does not provide communication costs in their paper, so we only compare its throughput. On average, \sys without and with preprocessing optimization respectively achieves a speed-up of $13\times$ and $40\times$ over \cite{Agrawal2019quotient} and 2 orders of magnitude over \cite{Cramer2018SPDZ2k}. 

\begin{table}[t]
    \centering \footnotesize
    \begin{tabular}{c|rrrr|rrr}
        \hline
        & \multicolumn{4}{c|}{Throughput ($10^4$ img/h)} & \multicolumn{3}{c}{Comm. (MB/img)} \\
        Model & \cite{Agrawal2019quotient} & \cite{Cramer2018SPDZ2k} & $\mathsf{P}$ & $\mathsf{P}^+$ & \cite{Cramer2018SPDZ2k} & $\mathsf{P}$ & $\mathsf{P}^+$ \\ \hline
        $2\times128\mathsf{FC}$ & 0.7 & 0.11 & 9.7 & 29.3 &  552 & 1.7 & 0.2 \\
        $3\times128\mathsf{FC}$ & 0.6 & 0.10 & 8.1 & 18.9 &  658 & 2.2 & 0.3 \\
        $2\times512\mathsf{FC}$ & 0.2 & 0.03 & 2.6 & 13.2 & 3470 & 5.2 & 0.8 \\ \hline
    \end{tabular}
    \caption{Performance comparison with QUOTIENT~\cite{Agrawal2019quotient} and Semi2k~\cite{Cramer2018SPDZ2k} in the 2 party setting. The models are represented as $n\times m\mathsf{FC}$, as used by \cite{Agrawal2019quotient}. $\mathsf{P}$ represents \sys and $\mathsf{P}^+$ represents \sysOpt.} 
    \label{table:compare-quotient}
\end{table}

For extensibility, we compare \sys with \cite{Cramer2018SPDZ2k} in the setting with over 2 parties on the MLP model of MNIST. The results are displayed in Table~\ref{table:compare-semi2k-mpc}. As a general $n$-PC architecture without the non-concluding assumption, the overhead of \cite{Cramer2018SPDZ2k} grows significantly with more parties. 
% struggles to ensure privacy under dishonest majority for general $n$-PC, its overhead grows significantly with more parties. 
In contrast, the overhead of \sys remains the same regardless of the number of parties, because the multiparty training procedure is decomposed into 2-PC in every training step, and switching \downers between steps is cost-free.

\begin{table}[t]
    \centering \footnotesize
    \begin{tabular}{c|rrr|rrr}
        \hline
        & \multicolumn{3}{c|}{Throughput ($10^3$ img/h)} & \multicolumn{3}{c}{Comm. (per img)} \\
        Model & \cite{Cramer2018SPDZ2k} & \sys & \sysOpt & \cite{Cramer2018SPDZ2k} & \sys & \sysOpt \\ \hline
        2 parties & 1.11 & 97 & 265 &      0.55GB & 1.7MB & 0.2MB \\
        3 parties & 0.61 & 97 & 265 &      2.58GB & 1.7MB & 0.2MB \\
        4 parties & 0.41 & 97 & 265 &      6.06GB & 1.7MB & 0.2MB \\
        5 parties & 0.07 & 97 & 265 &     57.69GB & 1.7MB & 0.2MB \\ \hline
    \end{tabular}
    \caption{Performance comparison with Semi2k~\cite{Cramer2018SPDZ2k} in multiple party setting. The model is MNIST-MLP.} 
    \label{table:compare-semi2k-mpc}
\end{table}

\subsection{\sys Against Attacks}\label{subsec:evaluation:gradient-attacks}

% We evaluate two methods of attacks against \sys, to demonstrate that our system is secure for privacy. The first one tries to breach the data privacy from the gradients leaked to the \mowner, as proposed by \cite{Zhao2020idlg}. And we design a second one specially against the preprocessing technique.

\noindent\textbf{Gradient Matching Attack~\cite{Zhao2020idlg}} tries to reconstruct the original input using the model updates. \sys proposes a secret-shared version of DPSGD to prevent this attack. In this segment, we evaluate the effectiveness of our method. 
We use the train-from-scratch CNN on CIFAR10 to conduct the evaluation. Note the attack \emph{requires a very small batch size}, so we set the batch size to one.
The reconstruction results are shown in Figure~\ref{fig:gradient-matching-attack}, together with the original input image data. It is clear that a noise level of $10^{-2}$ is sufficient to protect the original data. In practice, the training batch size is much larger (\eg 64), which could further reduce the requirements of noise levels. 

% In our experiments, we take the estimation of the gradients' maximum L2 norm to be $C = 8$ and adjust the noise level $\sigma$ to observe the reconstruction results. 

\begin{figure}
    \centering
    \includegraphics[width=0.90\linewidth]{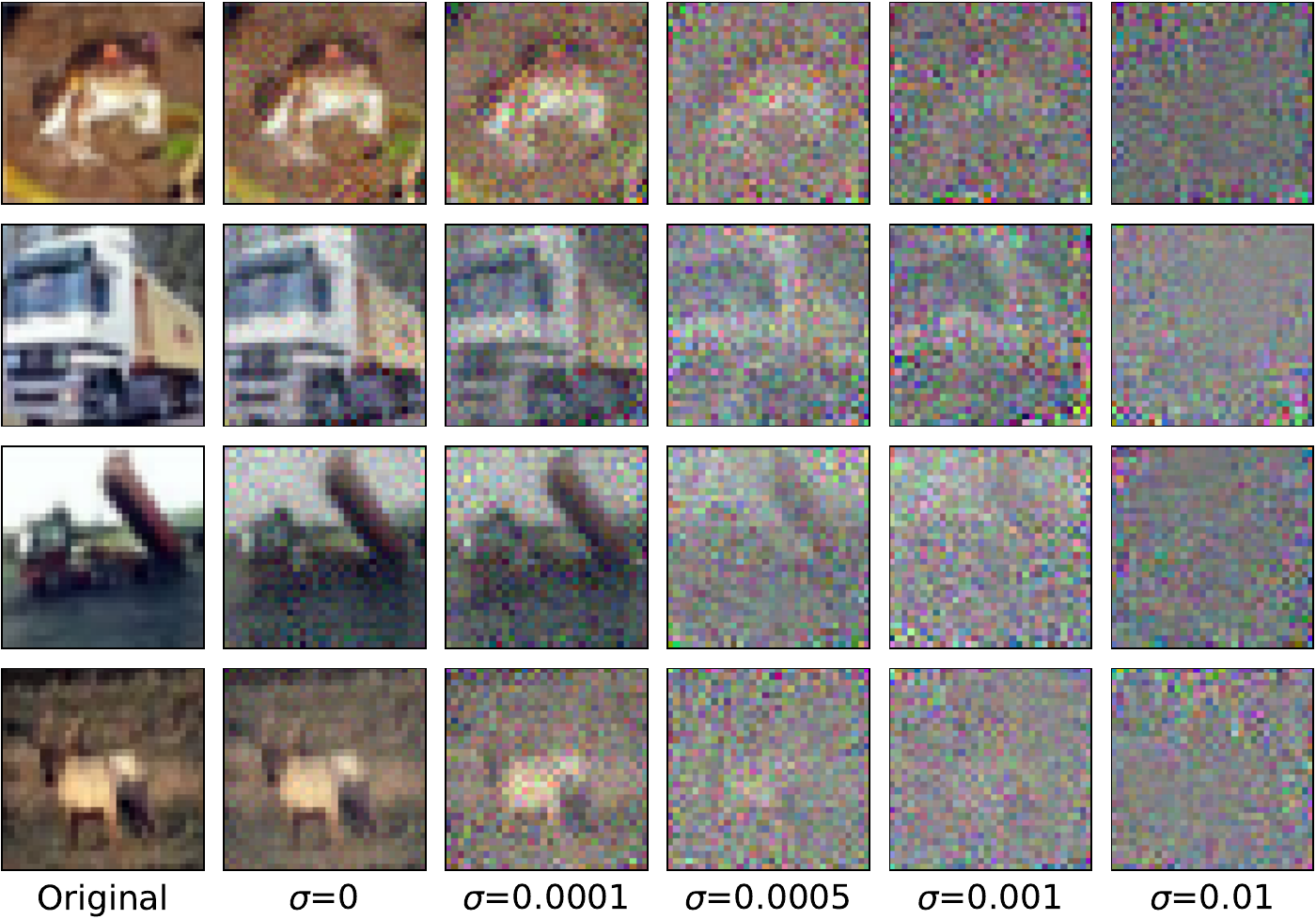}
    \caption{Gradient matching attack \cite{Zhao2020idlg} defended with different levels of noise}
    \label{fig:gradient-matching-attack}
\end{figure}

In addition, the attacker may try to reduce the noise by introducing a regularization term into the optimization goal \cite{Geiping2020inverting}. However, when the perturbations added to the gradients are large enough, such an attack would only produce smoothed but unidentifiable reconstructions. We defer the results in Appendix~\ref{appendix:inverting-gradients}.

\parab{Adaptive Attacks against the Preprocessing Design} 
We further consider an adaptive attack against the preprocessing design in \sys. In particular, a semi-honest \mowner may try to reconstruct the individual inputs from the linear combination of $\sum_{k\in[m']} a_k\vector{v}_k = \hat{\vector{v}}$ in Equation~(\ref{eq:v-linear-combination}) (for simplicity, we let $n_k=k$). 

Suppose \mowner tries to obtain \emph{one} specific element $v_i$ in $\vector{v}_i$ (\eg a specific pixel in an image). $v_i$ satisfy $\sum_{k\in[m']} a_k v_k = \hat{v}$. We launch the attack by the solving the problem stated in the proof sketch of Corollary~\ref{corollary:search-space} and
\iffalse
    As a white-box attack, \mowner may infer some knowledge of the distribution of \downer's dataset, for instance, based on \mowner's test dataset. Thus, \mowner learns that the decimal value $\tilde{v}_i = \mathsf{Encode}^{-1}(v_i)$ \footnote{Recall $\mathsf{Encode}(\tilde{v})$ is the fixed-point encoding function from $\mathbb{R}$ to $\mathbb{Z}_t$ (\S~\ref{subsec:prelim:additive-secret-sharing}).} is within $[-b/2, b/2]$, where $b$ is the estimated bound by \mowner (\eg the absolute value of any number during NN training is typically smaller than 10). Then $v_i$ has  $b\cdot 2^f$ possible values in $\integers_t$. In particular, \mowner solves the following problem: 
    $$\begin{aligned}
        \{v_i \in \integers_t \}_{i\in[m'-1]}
        ~ & \text{s.t.} ~ v_{m'-1} = a_{m'-1}^{-1}(\hat{v} - \sum_{i=0}^{m'-2}a_kv_k) \\
        ~ & \text{and} ~ \mathsf{Encode}^{-1}(v_i) \in [-b/2,b/2], \forall i\in[m']
    \end{aligned}$$
    
    Since $\integers_t$ is discrete, iterative methods such as gradient descent are not applicable in solving the problem. Instead, \mowner needs to enumerate all $b\cdot 2^f$ values for $\{v_i\}_{i\in[m'-1]}$ and solve the last variable $v_{m'-1}$ with Chinese Remainder Theorem. The search space is hence $O(2^{fm})$.
\fi
list the difficulty of the attack for several different sets of $m,f$ in Table~\ref{table:combination-breach-attack}, with a comparison to RSA modulus length $k$ offering equivalent bit security~\cite{NIST-SP800-57}.

We further empirically evaluate this attack on CIFAR10 dataset for a very small $m=2$ and precision $f=10$. \mowner presumes the pixel values of the normalized image are within $(-2, 2)$. It takes around 62.1 seconds for \mowner (single-thread, using the CPU stated in \S~\ref{subsec:evaluation:setup}) to obtain one pixel correctly. Thus, by proper setting of $m$ and $f$, \eg $m=8, f=25$, it would requires $9.5 \times 10^{55}$ seconds  (roughly $10^{48}$ years) to break one pixel. 
Symmetrically, 
% since \mowner's model weights and updates have the same privacy properties against \downer (as stated in Corollary~\ref{corollary:sys-prep-security}), 
it is equally difficult for a curious \downer to derive the model parameters.  %breaking the linear combination when $m$ and $f$ is properly set.

\begin{table}
    \centering
    \begin{tabular}{rr|rrr}
        \hline
        $m$ & $f$ & Search space & RSA-$k$ & Time to Break \\ \hline
        2 & 10 & 20 bits & $< 512$ & 62.1 seconds\\
        2 & 25 & 50 bits & $< 512$ & 2114 years\\
        4 & 25 & 100 bits & $\sim 2048$ & $2.38 \times 10^{18}$ years \\
        8 & 25 & 200 bits & $\sim 7680$ & $3.02 \times 10^{48}$ years \\ \hline
    \end{tabular}
    \caption{Hardness of the adaptive attack against the preprocessing optimization. RSA-$k$ means the RSA modulus bit length offering equivalent security guarantees. Time to break is evaluated or estimated using the CIFAR10 dataset.}
    \label{table:combination-breach-attack}
\end{table}

\section{Discussion and Future Work}

\subsection{Other Related Work}
We cover the related work that is not discussed in \S~\ref{subsec:intro:related-work}. 

\parab{Private Inference in Machine Learning} 
A series of art try to address the problem of private inference. 
% A number of works has tried to address the privacy problems in the inference procedure of machine learning. 
In this setting, a client wishes to do inference on its private data on a third-party model deployed by a service provider. 
% and the deployed model is public or owned by a service provider (server). 
CryptoNets~\cite{Dowlin2016cryptonets} and Gazelle~\cite{Juvekar2018gazelle} tackles this problem using the HE primitives. 
%are two early works that handles the privacy concerns with HE methods. 
\cite{Dowlin2016cryptonets} uses the SIMD technique of BFV within the batch size dimension, thus requiring a huge batch size of 8192 to reach maximum throughput. \cite{Juvekar2018gazelle} instead designs algorithms to exploit SIMD within one inference sample. It also combines garbled circuits (GC) for non-linear layers, while \cite{Dowlin2016cryptonets} changes non-linear layers to squaring function. On top of \cite{Juvekar2018gazelle}, Delphi~\cite{Mishra2020delphi} introduces a method to move all computationally heavy homomorphic operations to a preprocessing phase. However, this preprocessing must be executed once for every single sample on the fixed model weights. Thus, it is not applicable in training where the model weights are dynamic. 
% so end-to-end time performance is not improved.

Recent developments of CrypTFlow2~\cite{Rathee2020cryptflow2} and Cheetah~\cite{Huang2022cheetah} adapts oblivious transfer (OT) to substitute GC to improve efficiency. Further, Cheetah~\cite{Huang2022cheetah} proposes to apply the polynomial homomorphism instead of the SIMD technique in HE for linear computation. We adopt polynomial encoding as a primitive in \sys and adapt it for batched inputs in matrix multiplication and 2D convolution, similar to~\cite{Hao2022iron} and~\cite{liu2023llms}.
% since it achieves SOTA performance in the area.

\parab{Hardware Acceleration in ML}
Hardware acceleration has been widely applied in the field of machine learning. 
Mainstream ML frameworks all support GPU or TPU acceleration, but only for plaintext ML. Extending hardware acceleration to privacy-preserving learning is a less charted area. Recently, \cite{Knott2021CrypTen, Tan2021CryptGPU} explores hardware acceleration for various MPC primitives, by moving various linear and non-linear operations onto GPU, under a setting of 3 or more servers. Some art further explores specialized FPGA or ASIC architecture. For instance, \cite{Zhou2022PPMLAC} accelerates the generation of multiplication triples using a trusted FPGA chip. 
There are also proposals to accelerate HE primitives. For instance, the widely used HE framework, Microsoft SEAL~\cite{SEAL}, supports multi-threading. \cite{Badawi2020PrivFT} implements a simplified version of CKKS cryptosystem (without relinearization or rescaling) on GPU. 

% MPC frameworks, there have been attempts of explore the feasibility of moving various linear and non-linear operations onto GPU, under a setting of 3 or more servers. There have also been frameworks utilizing FPGA or ASIC architecture. For instance, \cite{Zhou2022PPMLAC} accelerates the generation of multiplication triples using a trusted FPGA chip. 

\subsection{Limitations and Explorations}

% \parab{Other NN Layers} Currently, \sys supports the most widely used layers in convolutional NNs. Other layers such as max-pooling, sigmoid and other activation functions can be supported relatively easily by extending our protocols. And the attention module (which requires multiplication between \emph{two secret-shared input matrices}) widely used in natural language processing could also be realized similarly to our backpropagation weight gradient computation protocol. Training of batch-normalization layers may need further inspection, since it requires statistical information about each batch of inputs. We leave this to future explorations. %  other kinds of layers for future work.

\parab{Parameter Selection} Private \emph{inference} framework of Cheetah~\cite{Huang2022cheetah} uses BFV with $t=2^{41}, q\approx 2^{109}, f=12$, with multiplicative depth of only 1. However, \sys need a much higher precision of $f=25$ bits to keep the gradients from diminishing into noises when training deep NNs. This also enforces a larger ciphertext modulus $q\approx 2^{180}$. Exploring a good balance between precision and HE parameters is an interesting direction. Recently, \cite{Rathee2021SIRNN} proposed to use mixed bit-widths in different NN operators to control the precision at a smaller granularity, which may be adapted to our framework.
% An extension of the RNS moduli into unsigned 128-bit integers address this problem, but such an implementation of a 128-bit-integers-based HE system is not available currently.

\parab{Gradient Clipping} In \sys, to protect the privacy of model updates, \downer cannot directly clip the gradients as the original DPSGD algorithm~\cite{Abadi2016dpsgd}. Instead, \downer estimates an upper bound $C$ of gradients beforehand. Thus, the added DP noises may be slightly higher. A straightforward solution is to let \mowner compute the gradient norm in the encrypted domain and sends it to \downer, which, however, imposes non-trivial overhead. 
We leave further exploration to future work.  % inspection into this problem as a future work. this would introduce a very large computation and communication 

% \parab{Inference Attacks} In \sys, the model prediction labels are revealed to \downer to enable the gradient descent algorithm. Prior art has discussed inference attacks targeted on the prediction results (\eg \cite{Shokri2017Membership} and \cite{Tramer2016Stealing}). \sys can integrate the existing defenses (\eg \cite{Lee2018Defending, Orekondy2020Prediction}) to mitigate concerns of inference attacks. %For instance, adding noise to the prediction could achieve further DP guarantees.

\section{Conclusion}

In this work, we present \sys, a collaborative training framework that simultaneously achieves data privacy, model privacy, and extensibility of multiple data providers. \sys designs end-to-end training protocols by combining HE and MPC primitives for high efficiency and utilizes a preprocessing technique to offload HE operations into an offline phase. Meanwhile, we develop a highly-parallelized GPU version of the BFV HE scheme to support \sys. Evaluation results show that \sys achieves training accuracy nearly identical to plaintext training, while the training overhead is greatly reduced compared to prior art. Furthermore, we demonstrate that \sys is secure against both existing and adaptive attacks.

\section*{Acknowledgement}   
We thank the anonymous reviewers for their valuable feedback. The research is supported in part by the National Key R\&D Program of China under Grant 2022YFB2403900, NSFC under Grant 62132011 and Grant 61825204, and Beijing Outstanding Young Scientist Program under Grant BJJWZYJH01201910003011.

\bibliographystyle{IEEEtranS}
\bibliography{ref}

% \newpage
% \balance
\appendix

\subsection{Polynomial encoding method}\label{appendix:polynomial-encoding}

We briefly introduce the polynomial encoding method to evaluate $\enc{\vector{y}} = \vector{W} \circ \enc{\vector{v}}$ in the BFV scheme, for $\circ$ as matrix multiplication and 2d-convolution (\ie $\conv(\vector{v};\vector{W})$). 

\parab{Matrix multiplication}
We use an improved version of the polynomial encoding method from~\cite{Huang2022cheetah}, as proposed by~\cite{Hao2022iron}. This improved version further takes the batch size dimension into account and reduces the communication costs for matrix multiplication.

Let weights be $\vector{W} \in \integers_t^{n_o \times n_i}$, and batched input be $\vector{v} \in \integers_t^{n_i \times B}$ (as $B$ column vectors). We assume $n_on_iB \leq N$, the polynomial degree of the BFV scheme. Larger matrices could be partitioned into smaller blocks, and the evaluation could be done accordingly. 

The input $\vector{v} = (v_{ij})_{i\in[n_i], j\in[B]}$ is encoded through $\pi_v$ as a polynomial
$$v = \pi_v(\vector{v}) = \sum_{j=0}^{n_i-1}\sum_{k=0}^{B-1} v_{jk} x^{kn_on_i + j}.$$
The weights $\vector{W} = (w_{ij})_{i\in[n_o], j\in[n_i]}$ are also encoded through $\pi_W$ as a polynomial (note how it is ``reversely'' encoded)
$$W = \pi_W(\vector{W}) = \sum_{i=0}^{n_o-1} \sum_{j=0}^{n_i-1} W_{ij}x^{in_i+n_i-1-j}.$$
$v$ is encrypted in the BFV scheme, and the evaluator could evaluate the polynomial product $W\cdot\enc{v}$ in the encrypted domain, obtaining an encrypted polynomial $\enc{y}$, where $y = Wv$. After decryption, $y = \sum_{i=0}^{n_on_iB-1}y_ix^i$ could be decoded through $\pi_y^{-1}$ to obtain the desired coefficients.
$$\vector{W}\cdot\vector{v} = \vector{y} = \pi_y^{-1}(y) = (y_{kn_on_i + in_i + n_i - 1})_{ik}$$
The correctness comes from the observation that the $(kn_on_i + in_i + n_i - 1)$-th term of $y$ is exactly
$$y_{kn_on_i + in_i + n_i - 1} = \sum_{j\in[n_i]}W_{ij}v_{jk}.$$

\parab{2D convolution}
We also improve the method of~\cite{Huang2022cheetah} for 2D convolution to adapt the batched input. Specifically, similar to the idea for matrix multiplication, we consider the output channel dimension and the batch size dimension in the polynomial encoding primitive.

Let weights $\vector{W}\in \integers_t^{c_o\times c_i\times s\times s}$, batched input $\vector{v}\in \integers_t^{B \times c_i\times h\times w}$. For simplicity, we assume $Bc_oc_ihw\leq N$. Larger inputs or weights could be partitioned to meet this requirement. 

The input $\vector{v} = (v_{b,c,i,j})_{b\in[B], c\in[c_i], i\in[h], j\in[w]}$ is encoded as a polynomial
\begin{align*}
    v = \pi_v(\vector{v}) = \sum_{b=0}^{B-1} \sum_{c=0}^{c_i-1} \sum_{i=0}^{h-1} \sum_{j=0}^{w-1} v_{b,c,i,j} x^{\mathsf{index}_v(b,c,i,j)} \\
    \mathrm{where~} \mathsf{index}_v(b,c,i,j) = bc_oc_ihw + chw + iw + j.
\end{align*}
The weights $\vector{W} = (W_{c',c,i,j})_{c'\in[c_o], c\in[c_i], i,j\in[h]}$ are also encoded reversely through $\pi_W$ as a polynomial
\begin{align*}
    W = \pi_W(\vector{W}) = \sum_{c'=0}^{c_o-1}\sum_{c=0}^{c_i-1}\sum_{i=0}^{h-1}\sum_{j=0}^{w-1}W_{c',c,i,j}x^{\mathsf{index}_W(c',c,i,j)} \\
    \mathrm{where~} \mathsf{index}_W(c',c,i,j) = O + c'c_ihw - chw - iw - j \\
    \mathrm{and~} O = (c_i-1)hw + (s-1)w + s-1.
\end{align*}
With this encoding, $\enc{y} = W\cdot \enc{v}$ has the desired result at the coefficients of specific terms. Precisely, if we denote $y = \sum_{i=0}^{Bc_oc_ihw-1}y_ix^i$, the decoding function is
\begin{align*}
    \vector{y} = \pi_y^{-1}(y) = (y_{\mathsf{index}_y(b,c',i,j)})_{b\in[B], c'\in[c_o], i\in[h], j\in[w]} \\
    \mathrm{where~} \mathsf{index}_y(b,c',i,j) = bc_oc_ihw + O + c'c_ihw + iw + j.
\end{align*}
Therefore, after decrypting $\enc{y}$, one could obtain the complete result of $\vector{y} = \conv(\vector{v}; \vector{W})$.

\subsection{Non-linear protocols}\label{appendix:nonlinear}

We briefly introduce the state-of-the-art ReLU, division (used in average pooling) and truncation protocols from prior art~\cite{Rathee2020cryptflow2, Huang2022cheetah} used in Pencil. These protocols are mostly based on oblivious transfers (OT) ~\cite{Yang2020Ferret}. In an $1$-out-of-$n$ oblivious transfer, there are two parties denoted as the Sender and the Receiver. The Sender provides $n$ messages while the Receiver takes $1$ of them, and Sender could not learn which one the Receiver takes while the Receiver could not learn any other messages except the chosen one.

\parab{ReLU} The ReLU activation function consists of a comparison protocol and a multiplexing (bit-injection) protocol.
$$\relu(x) = \drelu(x) \cdot x = \mathbf{1}\{x \geq 0\} \cdot x$$
$\mathbf{1}\{x>0\}$ could be obtained from the most significant bit of the secret shared $x = \share{x}_0 + \share{x}_1$, with
$$\begin{aligned}
& \mathbf{1}\{x \geq 0\} = \lnot \text{MSB}(x) \\
& = \text{MSB}(\share{x}_0) \oplus \text{MSB}(\share{x}_1) \oplus \mathbf{1}\{2^{\ell-1} - \share{x}_0 < \share{x}_1\}
\end{aligned}$$
The last term requires a secret comparison protocol where the two parties input $a = 2^{\ell-1} - \share{x}_0$ and $b = \share{x}_1$ respectively. The secret comparison protocol is implemented with OT and bit-and protocol~\cite{Rathee2020cryptflow2}. Specifically, the $\ell$-bit integer is decomposed into multiple blocks, each with $m$ bits. Then the two parties invokes 1-out-of-$2^m$ OT to obtain the greater-than and equality comparison results for each separate block. These comparison results of $\lceil \ell / m \rceil$ blocks are combined in a tree structure using a bit-and protocol (also implemented with OTs), according to the following observation:
$$\mathbf{1}\{a<b\} = \mathbf{1}\{a_1 < b_1\} \oplus (\mathbf{1}\{a_1 = b_1\} \land \mathbf{1}\{a_0<b_0\})$$
where $a=a_1||a_0, b = b_1||b_0$. This observation indicates that the comparison result of larger bit-length integers could be obtained from their smaller bit-length blocks. 

Finally, the multiplexing of ReLU is constructed with two OTs, where the two parties could select with their share of $\share{\drelu(x)}$, resulting in a canceling to zero when their selection bits are the same, and the original $x$ otherwise. This multiplexing protocol is detailed in Algorithm 6 of Appendix A in~\cite{Rathee2020cryptflow2}.

\parab{Division and Truncation} In average pooling, we need to use the division protocol where the shared integers are divided with a public divisor. In truncation, the shared integers are right-shifted $f$ bits. Our used neural networks only use average pooling of $2\times 2$ kernels. Therefore, the division used could also be considered a truncation of $2$ bits, and thus we only introduce the truncation protocol. Nevertheless, for the general case, the division protocol proposed by~\cite{Rathee2020cryptflow2} (Algorithm 9 in Appendix D) could be used.

For truncation, SecureML~\cite{Mohassel2017secureml} first proposed a protocol where each party directly performs local right-shift on its share. While lightweight, this protocol could introduce two kinds of errors: (1) a large error when the two shares' addition wraps over the ring; (2) a small error on the least significant bit. For accurate truncation, \cite{Rathee2020cryptflow2} instead invokes secret comparison protocols to compute the error terms and correct both the small and the large error by adding a correction term. Recently, \cite{Huang2022cheetah} observes that in private machine learning, as we are dealing with scaled decimals, the small 1-bit error would hardly affect the accuracies. Thus, \cite{Huang2022cheetah} proposes an approximate truncation protocol that only handles the possible large error, greatly reducing the overheads. 

\subsection{Security proofs}\label{appendix:security-proofs}

We give the proof for Theorem~\ref{theorem:basic-protocols}.

\begin{proof}
    We describe the simulators for data privacy and model privacy below, after which we provide a full hybrid argument that relies on the simulators.

    \parab{Simulator Protocol for Data Privacy}
    The simulator $\simulator$, when provided with the \mowner's view (including the model parameters $\vector{M}$, weight update $\grad{\vector{W}}, \grad{\vector{b}}$ and shares $\angleshare{\vector{X}}_0, \angleshare{\grad{\vector{Y}}}_0$ of each trainable linear layer in each training iteration), proceeds as follows:
    \begin{itemize}[nosep]
        \item[1.] $\simulator$ chooses an uniform random tape for the \mowner.
        \item[2.] $\simulator$ chooses keys $\pk, \sk$ for the HE scheme. 
        \item[3.] For every execution of Algorithm~\ref{alg:homomorphic-linear-evaluation}, instead of sending the encryption of its share $\enc{\angleshare{\vector{X}}_1}$, $\simulator$ simply sends encryption of a zero tensor $\enc{\vector{0}}$ (with the same shape as $\vector{X}$) to \mowner.
        \item[4.] For every execution of Algorithm~\ref{alg:weight-gradient-calculation}, similarly, $\simulator$ sends encryption of zeros instead of $\enc{\angleshare{\vector{X}}_1}, \enc{\angleshare{\grad{\vector{Y}}}_1}$ to \mowner. Moreover, for $\widetilde{\grad{\vector{W}}}$, it sends $\grad{\vector{W}} - \vector{s} + \angleshare{\grad{\vector{Y}}}_0 \circ \angleshare{\vector{X}}_0 + \vector{e}$.
        \item[5.] For non-linear layer evaluations, $\simulator$ uses randomized tensors as its share of the input. 
    \end{itemize}

    % In Appendix~\ref{appendix:security-proofs} we show that the simulated distribution is indistinguishable from the real world distribution.

    \parab{Simulator Protocol for Model Privacy}
    The simulator $\simulator$, when provided with the \downer's view (including the dataset $\mathcal{D}$ and the shares $\angleshare{\vector{X}}_1, \angleshare{\grad{\vector{Y}}}_1$ of each trainable layer in each training iteration), proceeds as follows:
    \begin{itemize}[nosep]
        \item[1.] $\simulator$ chooses an uniform random tape for the \downer.
        \item[2.] $\simulator$ receives $\pk$ of the HE scheme from the \downer.
        \item[3.] For every execution of Algorithm~\ref{alg:homomorphic-linear-evaluation}, the $\simulator$ receives $\enc{\angleshare{\vector{X}}_1}$ but sends $\enc{\angleshare{\vector{Y}}_1} = \enc{-\vector{s}}$ back, for some random tensor $\vector{s}$.
        \item[4.] For every execution of Algorithm~\ref{alg:weight-gradient-calculation}, similarly, the $\simulator$ sends only $\enc{-\vector{s}}$ back for decryption, instead of $\enc{\grad{\vector{W}}^{\mathsf{cross}} -\vector{s}}$.
        \item[5.] For non-linear layer evaluations, $\simulator$ uses randomized tensors as its share of the input. 
    \end{itemize}

    Now we show that the two simulated distribution is indistinguishable from the real-world distribution.

    \parab{Proof with a corrupted \mowner} 
    We show that the real world distribution is computationally indistinguishable from the simulated distribution via a hybrid argument. In the final simulated distribution, the simulator does not use the \downer's dataset as input, and so a corrupted \downer learns nothing in the real world.
    \begin{itemize}[leftmargin=*]
        \item $\hybrid_0$: This corresponds to the real world distribution where the \downer uses its training dataset $\mathcal{D}$.
        \item $\hybrid_1$: In this hybrid, we change the \downer's message in Algorithm~\ref{alg:homomorphic-linear-evaluation}. \downer sends the homomorphic encryption of a zero tensor $\enc{\vector{0}}$ instead of sending $\enc{\angleshare{\vector{X}}_1}$. It follows from the property of the homomorphic encryption (ciphertexts are computationally indistinguishable) that $\hybrid_1$ is indistinguishable from $\hybrid_0$.
        \item $\hybrid_2$: In this hybrid, similarly we change \downer's message in Algorithm~\ref{alg:weight-gradient-calculation}. \downer again uses encryption of zeros to substitute $\enc{\angleshare{\vector{X}}_1}$, $\enc{\grad{\vector{Y}}^C}$. $\hybrid_2$ is indistinguishable from $\hybrid_1$.
        \item $\hybrid_3$: In this hybrid, we further change \downer's behavior in Algorithm~\ref{alg:weight-gradient-calculation}: with the knowledge of \mowner's view, \downer sends $\grad{\vector{W}} - \vector{s} + \angleshare{\grad{\vector{Y}}}_0 \odot \angleshare{\vector{X}}_0 + \vector{e}$ as $\widetilde{\grad{\vector{W}}}$. Since this message exactly allows the \mowner to obtain $\grad{\vector{W}}$, thus $\hybrid_3$ is indistinguishable from $\hybrid_2$.
        \item $\hybrid_4$: In this hybrid, we replace \downer's inputs to the non-linear evaluation by random tensors. It follows from the security of these non-linear evaluation MPC protocols \cite{Huang2022cheetah, Rathee2020cryptflow2}, that $\hybrid_4$ is indistinguishable from $\hybrid_3$, completing the proof.
    \end{itemize}
    
    \parab{Proof with a corrupted \downer} 
    We show that the real-world distribution is computationally indistinguishable from the simulated distribution via a hybrid argument. In the final simulated distribution, the simulator does not use the \mowner's model weights $\vector{M}$ for training, and so a corrupted \downer learns nothing in the real world.
    \begin{itemize}[leftmargin=*]
        \item $\hybrid_0$: This corresponds to the real world distribution where the \mowner uses its model weights $\vector{M}$.
        \item $\hybrid_1$: In this hybrid, we change the \mowner's behavior in Algorithm~\ref{alg:homomorphic-linear-evaluation}. Instead of using the weights $\vector{W}$ to calculate $\vector{W}\circ \enc{\vector{X}} - \vector{s}$, \mowner simply samples $\vector{s}$ and sends back $\enc{-\vector{s}}$. Also, \mowner does not add bias term $\vector{b}$ to its share of output $\vector{Y}$. It follows from the security of homomorphic encryption that $\hybrid_1$ is computationally indistinguishable from $\hybrid_0$.
        \item $\hybrid_2$: In this hybrid, similarly we change the \mowner's behavior in Algorithm~\ref{alg:weight-gradient-calculation} by substituting the ciphertext message $\enc{\grad{\vector{W}}^{\mathsf{cross}} - \vector{s}}$ with simply $\enc{-\vector{s}}$. $\hybrid_2$ is computationally indistinguishable from $\hybrid_1$.
        \item $\hybrid_3$: In this hybrid, we replace \mowner's inputs to the non-linear evaluation by random tensors. It follows from the security of these non-linear evaluation MPC protocols \cite{Huang2022cheetah, Rathee2020cryptflow2}, that $\hybrid_3$ is indistinguishable from $\hybrid_2$, completing the proof.
    \end{itemize}
\end{proof}

\subsection{Neural Network Architecture}\label{appendix:nn-architecture}

\newtcolorbox{NNbox}
{
    arc=0pt,
    boxsep=1pt,left=2pt,right=2pt,top=2pt,bottom=2pt,
    colback=white,
}

\begin{figure}
    \centering
    \begin{NNbox}
        \small
        \begin{itemize}
            \item [(0)] Input shape $(1, 28, 28)$ flattened as $(784)$
            \item [(1)] Fully connected + ReLU: $128$ neurons $\to (128)$
            \item [(2)] Fully connected + ReLU: $128$ neurons $\to (128)$
            \item [(3)] Fully connected: $10$ neurons $\to (10)$
        \end{itemize}
    \end{NNbox}
    \caption{MLP for the MNIST task~\cite{Mohassel2018aby3, Patra2022blaze, Agrawal2019quotient}}
    \label{fig:nn-mnist-mlp}
\end{figure}
\begin{figure}
    \centering
    \begin{NNbox}
        \small
        \begin{itemize}
            \item [(0)] Input shape $(1, 28, 28)$
            \item [(1)] Conv2d + ReLU: $5$ channels, $5\times 5$ kernel, 2 stride, 2 margin padding $\to (5, 14, 14)$
            \item [(2)] Fully connected + ReLU: $100$ neurons $\to (100)$
            \item [(3)] Fully connected: $10$ neurons $\to (10)$
        \end{itemize}
    \end{NNbox}
    \caption{CNN for the MNIST task~\cite{Riazi2018Chameleon}}
    \label{fig:nn-mnist-cnn}
\end{figure}

\begin{figure}
    \centering
    \begin{NNbox}
    {\small
    \begin{itemize}
        \item [(0)] Input shape $(256, 64)$
        \item [(1)] Conv1d + ReLU: 128 channels, $5$ kernel $\to (128, 60)$
        \item [(2)] AvgPool1d: $2$ kernel $\to (128, 30)$
        \item [(3)] Conv1d + ReLU: 128 channels, $5\times 5$ kernel $\to (128, 26)$
        \item [(4)] AvgPool1d: $2$ kernel $\to (128, 13)$
        \item [(5)] Fully connected: $4$ neurons $\to (4)$
    \end{itemize}
    }
    \end{NNbox}
    \caption{TextCNN for the AGNews task~\cite{Zhang2015Agnews}}
    \label{fig:nn-agnews-textcnn}
\end{figure}

\iffalse
    \begin{figure}
        \centering
        \begin{NNbox}
        {\small
        \begin{itemize}
            \item [(0)] Input shape $(3, 32, 32)$
            \item [(1)] Conv2d + ReLU: 6 channels, $5\times 5$ kernel $\to (6, 28, 28)$
            \item [(2)] AvgPool2d: $2\times 2$ kernel $\to (6, 14, 14)$
            \item [(3)] Conv2d + ReLU: 16 channels, $5\times 5$ kernel $\to (16, 10, 10)$
            \item [(4)] AvgPool2d: $2\times 2$ kernel $\to (16, 5, 5)$
            \item [(5)] Fully connected + ReLU: $120$ neurons $\to (120)$
            \item [(6)] Fully connected + ReLU: $84$ neurons $\to (84)$
            \item [(7)] Fully connected: $10$ neurons $\to (10)$
        \end{itemize}
        }
        \end{NNbox}
        \caption{NN3~\cite{Lecun1998Gradient}, CNN for the CIFAR10 task}
        \label{fig:nn3}
    \end{figure}
\fi

\begin{figure}
    \centering
    \begin{NNbox}
    {\small
    \begin{itemize}
        \item [(0)] Input shape $(3, 32, 32)$
        \item [(1)] Conv2d + ReLU: 64 channels, $5\times 5$ kernel, $2$ margin padding $\to (64, 32, 32)$
        \item [(2)] AvgPool2d: $2\times 2$ kernel $\to (64, 16, 16)$
        \item [(3)] Conv2d + ReLU: 64 channels, $5\times 5$ kernel, $2$ margin padding $\to (64, 16, 16)$
        \item [(4)] AvgPool2d: $2\times 2$ kernel $\to (64, 8, 8)$
        \item [(5)] Conv2d + ReLU: 64 channels, $3\times 3$ kernel, $1$ margin padding $\to (64, 8, 8)$
        \item [(6)] Conv2d + ReLU: 64 channels, $1\times 1$ kernel $\to (64, 8, 8)$
        \item [(7)] Conv2d + ReLU: 16 channels, $1\times 1$ kernel $\to (16, 8, 8)$
        \item [(8)] Fully connected: $10$ neurons $\to (10)$
    \end{itemize}
    }
    \end{NNbox}
    \caption{CNN for the CIFAR10 task~\cite{Tian2022sphinx}}
    \label{fig:nn-cifar10-cnn}
\end{figure}

\iffalse
    \begin{figure}
        \centering
        \begin{NNbox}
        {\small
        \begin{itemize}
            \item [(0)] Input fed to GPT-2~\cite{Radford2019language}, extracting last hidden layer features $(768)$
            \item [(2)] Fully connected + ReLU: $128$ neurons $\to (128)$
            \item [(3)] Fully connected + ReLU: $128$ neurons $\to (128)$
            \item [(4)] Fully connected: $4$ neurons $\to (4)$
        \end{itemize}
        }
        \end{NNbox}
        \caption{Transfer learning model for the AGNews task}
        \label{fig:nn-agnews-transfer}
    \end{figure}
\fi

\begin{figure}
    \centering
    \begin{NNbox}
    {\small
    \begin{itemize}
        \item [(0)] Input shape $(3, 224, 224)$
        \item [(1)] 
        In NN5: AlexNet~\cite{Krizhevsky2017AlexNet} pretrained feature extractor $\to (9216)$; \\
        In NN6: ResNet50~\cite{He2016ResNet} pretrained feature extractor $\to (2048)$
        \item [(2)] Fully connected + ReLU: $512$ neurons $\to (512)$
        \item [(3)] Fully connected + ReLU: $256$ neurons $\to (256)$
        \item [(4)] Fully connected: $10$ neurons $\to (10)$
    \end{itemize}
    }
    \end{NNbox}
    \caption{Transfer learning models for the CIFAR10 task}
    \label{fig:nn-cifar10-transfer}
\end{figure}

The neural networks used for the evaluation are listed in Figure~\ref{fig:nn-mnist-mlp},~\ref{fig:nn-mnist-cnn},~\ref{fig:nn-agnews-textcnn},~\ref{fig:nn-cifar10-cnn},~\ref{fig:nn-cifar10-transfer}. The neurons are the number of output dimensions for fully connected layers. The channels are the number of output channels for 2D convolution layers. For concision, the activation function ReLU is written directly after linear layers. The 1D convolution and 1D average pooling are implemented simply as a specialization of 2D variants.

\subsection{Inverting Gradients}\label{appendix:inverting-gradients}

In this section we evaluate the gradient inverting attack introduced by \cite{Geiping2020inverting}, which introduces a regularization term to remove the noises in the reconstruction results. We again use CNN and CIFAR10 for evaluating this attack. The results are shown in Figure~\ref{fig:gradient-inverting-attack}. The results confirm that, when $\sigma\geq 10^{-2}$ the reconstructed images are smoothed but completely unidentifiable.

\begin{figure}
    \centering
    \includegraphics[width=.90\linewidth]{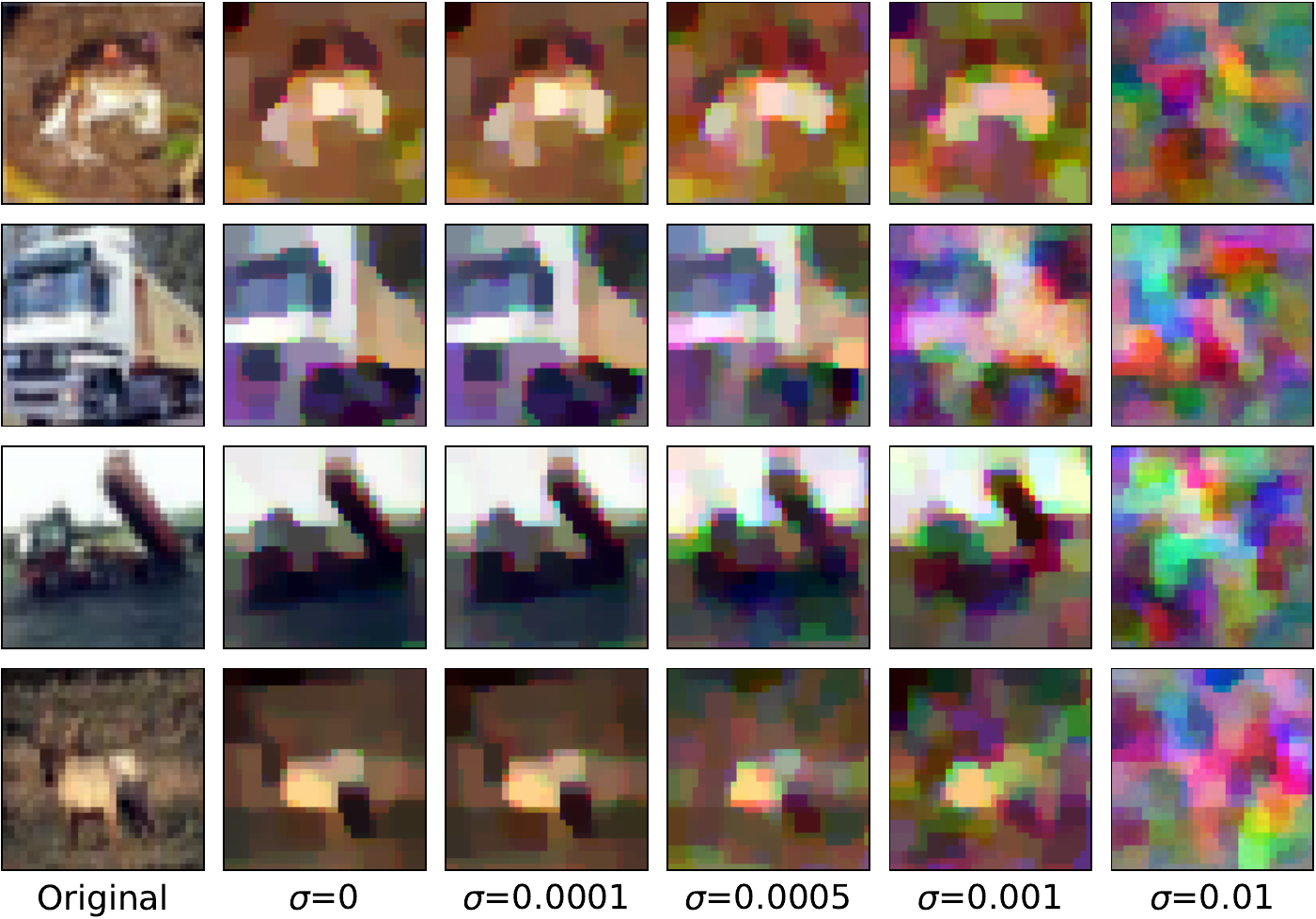}
    \caption{Gradient inverting attack~\cite{Geiping2020inverting} defended with different levels of noise}
    \label{fig:gradient-inverting-attack}
\end{figure}

\clearpage
\appendices
\section{Artifact Appendix}
\label{appendix:artifact}

% The artifact appendix is meant to be a self-contained document presenting a roadmap for setting up and evaluating your artifact. It should provide the following elements:
% \begin{enumerate}
% \item a list of the hardware, software, and configuration \textbf{requirements} for running the artifact;
% \item a clear description of how, and in what respects, the artifact \textbf{supports} the research presented in the paper;
% \item a guide for how others can \textbf{execute and validate} the artifact for its functional and usability aspects;
% \item the \textbf{major claims} made by your paper and a clear description of how to obtain data for each claim through your supplied artifact.
% \end{enumerate}

% Points 3-4 are not required if only the \textit{Available} badge is requested. Linking the paper claims to the artifact is a necessary step that allows artifact evaluators to functionally test and reproduce your results. Towards that end, explicitly list down items (e.g., results, plots, tables) from the paper and cross-reference them with the experiments to be reproduced.

% Unless otherwise stated, filling every section below with the requested contents is mandatory for participating in the artifact evaluation process. Section titles cannot be changed.

\subsection{Description \& Requirements}

% This section should list all the information necessary to recreate the experimental setup you used to run your artifact.

% Provide also a link to an archival repository where all the artifact's main components (software, datasets, documentation, etc.) can be accessed and, where this applies, the minimal hardware and software requirements to run your artifact.

% It is also very good practice to list and describe in this section benchmarks where those are part of your artifacts or simply have been used to produce results with it.

\subsubsection{How to access}
\begin{itemize}
    \item DOI: \href{https://doi.org/10.5281/zenodo.10140580}{https://doi.org/10.5281/zenodo.10140580}
    \item GitHub: \href{https://github.com/lightbulb128/Pencil}{https://github.com/lightbulb128/Pencil}
\end{itemize}

% Describe here how to access your artifact. During the artifact evaluation, in case of a private repository, you should provide instructions on how to access it. For the camera-ready version of the appendix, you must provide a DOI link to the AEC-approved artifact version that you must upload by then to permanent storage.

\subsubsection{Hardware dependencies}
A computer with CUDA supporting GPU devices. The code has been tested on a machine with a NVIDIA RTX A6000 GPU and CUDA 11.7.

% Simply write ``None." where this does not apply to your artifact.

\subsubsection{Software dependencies} 
A Linux machine, with python $\geq$ 3.8 and CUDA $\geq$ 11.7.
The code has been tested on Ubuntu 20.04.3 LTS.

% Simply write ``None." where this does not apply to your artifact.

\subsubsection{Benchmarks} 
MNIST, CIFAR10, AGNews datasets and pretrained ResNet50 and GPT-2 models are used. They are available from the \texttt{torchvision}, \texttt{torchtext} and \texttt{transformers} package.

% Describe here any data (e.g., datasets, models, workloads, etc.) required by the experiments with this artifact reported in your paper. Simply write ``None." where this does not apply to your artifact.

\subsection{Artifact Installation \& Configuration}

% This section should include all the high-level installation and configuration steps required to prepare the environment to be used for the evaluation of your artifact.

The installation can be done via running \texttt{build\_dependencies.sh} provided in the repository. At a high level, it clones the three dependencies (\texttt{OpenCheetah}, \texttt{EzPC}, \texttt{seal-cuda}) and builds them. \texttt{OpenCheetah}, \texttt{EzPC} are 2-party secure computation libraries providing non-linearities evaluation, which are adapted to provide python interfaces, and \texttt{seal-cuda} is a homomorphic encryption library with GPU parallelization providing linearities related utilities.

% \subsection{Experiment Workflow}

% This section should provide a high-level view of the experimental workflow and how it is implemented, invoked, and (if needed) customized. The section is optional if the experiment workflow can be easily embedded in the Evaluation section.

% \textit{Detailed information for submission, reviewing, and badging process followed for the evaluation of NDSS 2024 artifacts can be found at:} \url{https://secartifacts.github.io/ndss2024/}.

\subsection{Major Claims}

% Enumerate here the major claims (Cx) made in your paper. For this purpose, we ask you to use the this and the following section and cross-reference the items therein, as explained next.

% Follows an example:

% \begin{itemize}
%     \item (C1): \textsc{System} achieves the same accuracy of the state-of-the-art systems for a task X while saving 2x storage resources. This is proven by the experiment (E1) whose results are illustrated/reported in [refer to your paper's plots, tables, sections, etc.].
%     \item (C2): \textsc{System} has been used to uncover new bugs in the Y software. This is proven by the experiments (E2) and (E3) in [ibid].
% \end{itemize}

\begin{itemize}
    \item (C1): Models trained privately with \sys has similar accuracy metrics as trained in plaintext, as shown by Table~\ref{table:training-accuracies} and Fig.~\ref{fig:acc-curve}. Reproduced by Experiment (E1).
    \item (C2): Training costs in \sys are acceptable as shown by Table~\ref{table:training-costs}. Reproduced by Experiment (E2).
    \item (C3): Including more DOes with heterogeneous data could improve model accuracy in \sys, as shown by Fig.~\ref{fig:multiple-does-acc}. Reproduced by Experiment (E3).
    \item (C4): One could integrate DP methods to provide stronger privacy guarantees over the model gradients, as shown by Table~\ref{table:dp-acc}. Reproduced by Experiment (E4).
\end{itemize}

\subsection{Evaluation}

The evaluation mainly focuses on two parts: model accuracy and training cost. In the provided repository, \texttt{gen\_scripts.sh} could be used to generate a series of evaluation scripts. These scripts are prefixed with \texttt{fullhe} or \texttt{prep}, corresponding to \sys or \sysOpt. Recall that \sys uses online homomorphic encryption while \sysOpt uses the preprocessing optimization which conveys all HE computations to the offline phase. Each script itself simulates the two participants DO and MO, which communicate through a localhost port. 

Each script represents a single experiment, either (1) to train a model and tests its accuracy, or (2) to measure the training cost of a single training step. The scripts will output logs in the corresponding \texttt{logs/} folder with the same name as the script, which will contain the accuracy and cost measurement results.

\subsubsection{Experiment (E0)} [\texttt{trivial.sh}] Train the simplest model on MNIST for 1 epoch. This is used for checking whether the installation is functional. Estimated time: 0.6 and 0.2 hours for \textsf{Pencil} and \textsf{Pencil}$^+$ respectively
\subsubsection{Experiment (E1)} [\texttt{train\_nn*.sh}] Train the model end-to-end and outputs the accuracy every 1/5 epochs. Estimated times are listed in Table \ref{tab:trainig-time}.
\subsubsection{Experiment (E2)} [\texttt{costs\_nn*.sh}] Train the model for one step and output the time and communication costs. For \textsf{Pencil}, cost evaluation time is expected to be below 1 minutes for each model. For \textsf{Pencil}$^+$, the time also includes the preprocessing phase time, listed in Table III in the main paper for each model.
\subsubsection{Experiment (E3)} [\texttt{dp*\_nn*.sh}] Train the model with different DP noise levels. Estimated times are listed in Table \ref{tab:trainig-time}.
\subsubsection{Experiment (E4)} [\texttt{hetero\_nn*.sh}] Train the model with 5 simulated DOes with different data distributions. Estimated times are listed in Table \ref{tab:trainig-time}.

The end-to-end online training time for each NN listed in Table \ref{tab:trainig-time} could also be calculated from the throughput listed in the paper.

\begin{table}
    \centering
    \begin{tabular}{cc|rr}
        \hline
        Task & Model & \textsf{Pencil} & \textsf{Pencil}$^+$ \\
        \hline
        MNIST & MLP & 6.2 & 2.3 \\
        MNIST & CNN & 7.8 & 4.4 \\
        AGNews & TextCNN & 162 & 78.9 \\
        CIFAR10 & CNN & 278 & 227 \\
        \hline
        CIFAR10 & AlexNet & 96 & 33.3 \\
        CIFAR10 & ResNet50 & 27.3 & 6.5 \\
        \hline
    \end{tabular}
    \vspace{0.5em}
    \caption{Online training time (hours) for each model.}
    \label{tab:trainig-time}
\end{table}

% If the \textit{Reproduced} badge is requested, the section should reference your paper's key results: describe the expected numbers and, where applicable, the maximum variation expected (particularly important for performance numbers). If an experiment is a scaled-down version of the one run for the paper's evaluation, provide a clear indication of it with a justification of the scaling-down being meaningful.

% We also highly encourage you to provide your estimates of human- and compute-time for each of the listed experiments.

% Follows an exemplary structure for one experiment (Ey):

% \subsubsection{Experiment (E1)}
% [Optional Name] [30 human-minutes + 1 compute-hour]: provide a short explanation of the experiment and expected results.

% \textit{[How to]} Describe thoroughly the steps to perform the experiment and to collect and organize the results as expected from your paper. We encourage you to use the following structure with three blocks for the description of an experiment (and to provide precise indications about the expected outcome for each of the steps from the blocks):

% \textit{[Preparation]}
% Describe the steps required to prepare and configure the environment for this experiment.

% \textit{[Execution]}
% Describe the steps to run this experiment.

% \textit{[Results]}
% Describe the steps required to collect and interpret the results for this experiment.

\subsection{Customization}
The provided code could be run with other arguments than listed in the generated scripts, including learning rate, optimizer type, gradient clipping bounds and noise levels, etc.. The detailed list and usage are provided in the \texttt{readme.md} file in the repository.

% \subsection{Notes}
% This section is meant to allow authors to include any further important notes that may not fall within any of the previous categories. We kindly encourage you to remove this section where this sort of content may not be strictly needed (rather than filling it with unnecessary or redundant information).

% remove the following block when merging the appendix with the camera-ready full paper
%%%

% that's all folks
\end{document}